\documentclass{vldb}
\usepackage{xspace}
\usepackage{amsmath}
\usepackage{amssymb}
\usepackage{float}
\usepackage{enumitem}
\usepackage{graphicx}
\usepackage[caption=false]{subfig}
\usepackage{listings}
\usepackage{courier}
\usepackage{soul}
\usepackage[usenames,dvipsnames]{xcolor}
\usepackage{bbm, bm}
\usepackage{url}
\usepackage{todonotes}
\usepackage{verbatim}
\usepackage{algorithm}
\usepackage{array}
\usepackage{tikz}
\usepackage[pdfpagelabels=false]{hyperref}
\setlength{\paperheight}{11in}
\hypersetup{pageanchor=false}

\newlist{myitemize}{itemize}{1}
\setlist[myitemize,1]{label=\textbullet,leftmargin=10pt,itemsep=0pt,parsep=2pt}

\newcommand{\eqdef}{\stackrel{\text{def}}{=}}



\newcommand{\cut}[1]{}
\newcommand{\eat}[1]{}
\newcommand{\commentresolved}[1]{}




\newcommand{\set}[1]{\{#1\}}                    
\newcommand{\setof}[2]{\{{#1}:{#2}\}}
\usepackage{aliascnt}  		

\newtheorem{theorem}{Theorem}[section]          	
\newaliascnt{lemma}{theorem}				
\newtheorem{lemma}[lemma]{Lemma}              	
\aliascntresetthe{lemma}  					
\newaliascnt{conjecture}{theorem}			
\aliascntresetthe{conjecture}  				
\newaliascnt{remark}{theorem}				
              
\aliascntresetthe{remark}  					
\newaliascnt{corollary}{theorem}			
\newtheorem{corollary}[corollary]{Corollary}      
\aliascntresetthe{corollary}  				
\newaliascnt{definition}{theorem}			
\aliascntresetthe{definition}  				
\newaliascnt{proposition}{theorem}			
\aliascntresetthe{proposition}  				
\newaliascnt{example}{theorem}			
\newtheorem{example}[example]{Example}  	
\aliascntresetthe{example}  				




\usepackage{relsize}

\newcommand{\R}{\mathbb{R}}

\newcommand\ignore[1]{\relax}

\newcommand{\name}{EntropyDB\xspace}






   %







\makeatletter
\def\BState{\State\hskip-\ALG@thistlm}
\newcommand{\vast}{\bBigg@{4}}
\newcommand{\Vast}{\bBigg@{5}}
\makeatother

\newcolumntype{M}[1]{>{\centering\arraybackslash}m{#1}}

\tikzstyle{mybox} = [draw=black, thick, rectangle]

\newcommand*{\mathcolor}{}
\def\mathcolor#1#{\mathcoloraux{#1}}
\newcommand*{\mathcoloraux}[3]{%
  \protect\leavevmode
  \begingroup
    \color#1{#2}#3%
  \endgroup
}

\chardef\_=`_

\newcommand{\ie}{\textrm{i.e.}\xspace}
\newcommand{\eg}{\textrm{e.g.}\xspace}
\newcommand{\E}{\mathbb{E}}
\newcommand\inner[2]{\langle #1, #2 \rangle}

\definecolor{mygreen}{rgb}{0,0.6,0}
\definecolor{mygray}{rgb}{0.5,0.5,0.5}
\definecolor{mymauve}{rgb}{0.58,0,0.82}

\lstdefinestyle{myJava}{ %
  language=java,
  mathescape=true,
  backgroundcolor=\color{white},   
  basicstyle=\scriptsize,        
  breaklines=true,                 
  captionpos=b,                    
  commentstyle=\color{mygreen},    
  escapeinside={*@}{@*},         
  keywordstyle=\color{blue},       
  stringstyle=\color{mymauve},     
}
\lstdefinestyle{mySQL}{ %
    mathescape=true,
    language=SQL,
    basicstyle=\footnotesize\ttfamily,
    deletekeywords={MIN},
    otherkeywords={LIMIT, GROUP, BY, ORDER, DESC},
    showstringspaces=false
}

\setstcolor{red}

\setlength{\textfloatsep}{7pt}

\begin{document}

\title{Probabilistic Database Summarization \\ for Interactive Data Exploration}
\author{
\alignauthor
Laurel Orr, Magdalena Balazinska, and Dan Suciu \\
      \affaddr{University of Washington}\\
      \affaddr{Seattle, Washington, USA}\\
      \email{\{ljorr1, magda, suciu\}@cs.washington.edu}
}

\maketitle

\begin{sloppypar}
\begin{abstract}
We present a probabilistic approach to generate a small, query-able summary of a dataset for interactive data exploration. Departing from traditional summarization techniques, we use the Principle of Maximum Entropy to generate a probabilistic representation of the data that can be used to give approximate query answers.  We develop the theoretical framework and formulation of our probabilistic representation and show how to use it to answer queries. We then present solving techniques and give three critical optimizations to improve preprocessing time and query accuracy. Lastly, we experimentally evaluate our work using a 5 GB dataset of flights within the United States and a 210 GB dataset from an astronomy particle simulation. While our current work only supports linear queries, we show that our technique can successfully answer queries faster than sampling while introducing, on average, no more error than sampling and can better distinguish between rare and nonexistent values.
\end{abstract}

\section{Introduction}
\label{sec:introduction}
{\em Interactive data exploration} allows a data analyst to browse, query, transform, and visualize data at ``human speed''~\cite{crotty2016case}. It has been long recognized that general-purpose DBMSs are ill suited for interactive exploration~\cite{mozafari2015handbook}. While users require interactive responses, they do not necessarily require precise responses because either the response is used in some visualization, which has limited resolution, or an approximate result is sufficient and can be followed up with a more costly query if needed. {\em Approximate Query Processing} (AQP) refers to a set of techniques designed to allow fast but approximate answers to queries. All successful AQP systems to date rely on sampling or a combination of sampling and indexes. The sample can either be computed on-the-fly, \eg, in the highly influential work on {\em online aggregation}~\cite{hellerstein1997online} or systems like DBO~\cite{jermaine2008scalable} and Quickr~\cite{kandula2016quickr}, or precomputed offline, like in BlinkDB~\cite{agarwal2013blinkdb} or Sample$+$Seek~\cite{ding2016samplelus}. Samples have the advantage that they are easy to compute, can accurately estimate aggregate values, and are good at detecting heavy hitters. However, sampling may fail to return estimates for small populations; targeted stratified samples can alleviate this shortcoming, but stratified samples need to be precomputed to target a specific query, defeating the original purpose of AQP.

In this paper, we propose an alternative approach to interactive data exploration based on the Maximum Entropy principle (MaxEnt). The MaxEnt model has been applied in many settings beyond data exploration; \eg, the {\em multiplicative weights} mechanism~\cite{hardt2010multiplicative} is a MaxEnt model for both differentially private and, by~\cite{dwork2015generalization}, statistically valid answers to queries, and it has been shown to be theoretically optimal. In our setting of the MaxEnt model, the data is preprocessed to compute a probabilistic model. Then, queries are answered by doing probabilistic inference on this model. The model is defined as the probabilistic space that obeys some observed statistics on the data and makes no other assumptions (Occam's principle). The choice of statistics boils down to a precision/memory tradeoff: the more statistics one includes, the more precise the model and the more space required. Once computed, the MaxEnt model defines a probability distribution on possible worlds, and users can interact with this model to obtain approximate query results. Unlike a sample, which may miss rare items, the MaxEnt model can infer something about every query.

Despite its theoretical appeal, the computational challenges associated with the MaxEnt model make it difficult to use in practice. In this paper, we develop the first scalable techniques to compute and use the MaxEnt model. As an application, we illustrate it with interactive data exploration. Our first contribution is to simplify the standard MaxEnt model to a form that is appropriate for data summarization (Sec.~\ref{sec:probabilistic_approach}). We show how to simplify the MaxEnt model to be a multi-linear polynomial that has one monomial for each possible tuple (Sec.~\ref{sec:probabilistic_approach}, Eq.~(\ref{eq:p})) rather than its na\"{i}ve form that has one monomial for each possible world (Sec.~\ref{sec:background}, Eq.~(\ref{eq:pr:i})). Even with this simplification, the MaxEnt model starts by being larger than the data. For example, the flights dataset is 5 GB, but the number of possible tuples is approximately $10^{10}$, more than 5 GB. Our {\em first optimization} consists of a compression technique for the polynomial of the MaxEnt model (Sec~\ref{subsec:compress}); for example, for the flights dataset, the summary is below 200MB, while for our larger dataset of 210GB, it is less than 1GB. Our {\em second optimization} consists of a new technique for query evaluation on the MaxEnt model (Sec.~\ref{subsec:opt:aqp}) that only requires setting some variables to 0; this reduces the runtime to be on average below 500ms and always below 1s.

We find that the main bottleneck in using the MaxEnt model is computing the model itself; in other words, computing the values of the variables of the polynomial such that it matches the existing statistics over the data. Solving the MaxEnt model is difficult; prior work for multi-dimensional histograms~\cite{markl2005consistently} uses an iterative scaling algorithm for this purpose. To date, it is well understood that the MaxEnt model can be solved by reducing it to a convex optimization problem~\cite{wainwright2008GME} of a {\em dual} function (Sec.~\ref{sec:background}), which can be solved using Gradient Descent. However, even this is difficult given the size of our model. We managed to adapt a variant of Stochastic Gradient Descent called Mirror Descent~\cite{convex-optimization-algorithms-complexity}, and our optimized query evaluation technique can compute the MaxEnt model for large datasets in under a day.

In summary, in this paper, we develop the following new techniques:
\begin{myitemize}
\item A closed-form representation of the probability space of possible worlds using the Principle of Maximum Entropy, and a method
to use the representation to answer queries in expectation (Sec~\ref{sec:probabilistic_approach}).
\item Compression method for the MaxEnt summary (Sec~\ref{subsec:compress}).
\item Optimized query processing techniques (Sec~\ref{subsec:opt:aqp}).
\item A new method for selecting 2-dimensional statistics based on a modified KD-tree (Sec~\ref{subsec:stat:selection}).
\end{myitemize}

We implement the above techniques in a prototype system that we call \name and evaluate it on the flights and astronomy datasets. We find that \name can answer queries faster than sampling while introducing no more error, on average, and does better at identifying small populations.

\section{Background}
\label{sec:background}
We summarize data by fitting a probability distribution over the active domain. The distribution assumes that the domain values are distributed in a way that preserves given statistics over the data but are otherwise uniform.

For example, consider a data scientist who analyzes a dataset of flights in the United States for the month of December 2013. All she knows is that the dataset includes all flights within the 50 possible states and that there are 500,000 flights in total. She wants to know how many of those flights are from CA to NY. Without any extra information, our approach would assume all flights are equally likely and estimate that there are $500,000/50^2 = 200$ flights.

Now suppose the data scientist finds out that flights leaving CA only go to NY, FL, or WA. This changes the estimate because instead of there being $500,000/50 = 10,000$ flights leaving CA and uniformly going to all 50 states, those flights are only going to 3 states. Therefore, the estimate becomes $100,000/3 = 3,333$ flights.

This example demonstrates how our summarization technique would answer queries, and the rest of this section covers its theoretical foundation.

\subsection{Possible World Semantics}
To model a probabilistic database, we use the slotted possible world semantics where rows have an inherent unique identifier, meaning the order of the tuples matters. Our set of possible worlds is generated from the active domain and size of each relation. Each database instance is one possible world with an associated probability such that the probabilities of all possible worlds sum to one.

In contrast to typical probabilistic databases where the probability of a relation is calculated from the probability of each tuple, we calculate a relation's probability from a formula derived from the MaxEnt principle and a set of constraints on the overall distribution. This approach captures the idea that the distribution should be uniform except where otherwise specified by the given constraints.

\subsection{The Principle of Maximum Entropy}
The Principle of Maximum Entropy (MaxEnt) states that subject to prior data, the probability distribution which best represents the state of knowledge is the one that has the largest entropy. This means given our set of possible worlds, $PWD$, the probability distribution $\Pr(I)$ is one that agrees with the prior information on the data and maximizes
\begin{equation*}
-\sum_{I \in PWD}\Pr(I)\log(\Pr(I))
\end{equation*}
where $I$ is a database instance, also called possible world. The above probability must be normalized, $\sum_I \Pr(I)=1$, and must satisfy the prior information represented by a set of $k$ expected value constraints:
\begin{equation}
\label{eq:e:s}
s_{j} = \E[\phi_{j}(I)], \ \ j=1,k 
\end{equation}
where $s_{j}$ is a known value and $\phi_{j}$ is a function on $I$ that returns a numerical value in $\mathbb{R}$.
One example constraint is that the number of flights from CA to WI is 0.

Following prior work on the MaxEnt principle and solving constrained optimization problems \cite{berger1996nlpapproach,wainwright2008GME,re2012understanding}, the MaxEnt probability distribution takes the form
\begin{equation}
\label{eq:pr:i}
\Pr(I) = \frac{1}{Z}\exp\left(\sum_{j = 1}^{k}\theta_{j}\phi_{j}(I)\right) 
\end{equation}
where $\theta_{j}$ is a parameter and $Z$ is the following normalization constant:
\begin{equation*}
Z \eqdef \sum_{I \in PWD}\left(\exp\left(\sum_{j = 1}^{k}\theta_{j}\phi_{j}(I)\right)\right).
\end{equation*}

To compute the $k$ parameters $\theta_j$, we must solve the non-linear system of $k$ equations, Eq.~(\ref{eq:e:s}), which is computationally difficult.  However, it turns out \cite{wainwright2008GME} that Eq.~(\ref{eq:e:s}) is equivalent to $\partial \Psi / \partial \theta_j = 0$ where the {\em dual} $\Psi$ is defined as:
\begin{equation*}
\Psi \eqdef \sum_{j = 1}^{k} s_{j}\theta_{j} - \ln\left(Z\right).
\end{equation*}
Furthermore, $\Psi$ is concave, which means solving for the $k$ parameters can be achieved by maximizing $\Psi$. We note that $Z$ is called the {\em partition function}, and its log, $\ln(Z)$, is called the {\em cumulant}.

\section{\name Approach}
\label{sec:probabilistic_approach}
This section explains how we use the MaxEnt model for approximate query answering. We first show how we use the MaxEnt framework to transform a single relation $R$ into a probability distribution represented by $P$. We then explain how we use $P$ to answer queries over $R$.

\subsection{Maximum Entropy Model of Data}
\label{sec:theoretical_setup}

We consider a single relation with $m$ attributes and schema $R(A_1,\ldots, A_m)$ where each attribute, $A_i$, has an active domain $D_i$, assumed to be discrete and ordered.\footnote{We support continuous data types by bucketizing their active domains.}  Let $Tup = D_1 \times D_2 \times \dots \times D_m = \{t_1, \ldots, t_d\}$ be the set of all possible tuples.  Denoting  $N_i = |D_i|$, we have $d = |Tup| = \prod_{i=1}^m |N_i|$.

An {\em instance} for $R$ is an ordered bag of $n$ tuples, denoted $I$. For each $I$, we form a frequency vector which is a $d$-dimensional vector\footnote{This is a standard data model in several applications, such as differential privacy \cite{li2010optimizing}.} $\mathbf{n}^I = [n^I_1,\ldots, n^I_d] \in \R^d$, where each number $n^I_i$ represents the count of the tuple $t_i \in Tup$ in $I$ (Fig.~\ref{fig:model}).  The mapping from $I$ to $\mathbf{n}^I$ is not one-to-one because the instance $I$ is ordered, and two distinct instances may have the same counts. Further, for any instance $I$ of cardinality $n$, $||\mathbf{n}^I||_1 = \sum_i n^I_i = n$. The frequency vector of an instance consisting of a single tuple $\set{t_i}$ is denoted $\mathbf{n}^{t_i} = [0,\ldots,0,1,0,\ldots,0]$ with a single value $1$ in the $i$th position; \ie, $\setof{\mathbf{n}^{t_i}}{i = 1,d}$ forms a basis for $\R^d$.

While the MaxEnt principle allows us, theoretically, to answer any query probabilistically by averaging the query over all possible instances; in this paper, we limit our discussion to linear queries. A {\em linear query} is a $d$-dimensional vector $\mathbf{q} = [q_1, \ldots, q_d]$ in $\R^d$.  The answer to $\mathbf{q}$ on instance $I$ is the dot product $\inner{\mathbf{q}}{\mathbf{n}^I} = \sum_{i=1}^d q_i n^I_i$. With some abuse of notation, we will write $\mathbf{I}$ when referring to $\mathbf{n}^I$ and $\mathbf{t}_i$ when referring to $\mathbf{n}^{t_i}$.  Notice that $\inner{\mathbf{q}}{\mathbf{t}_i} = q_i$, and, for any instance $I$, $\inner{\mathbf{q}}{\mathbf{I}} = \sum_i n_i^I \inner{\mathbf{q}}{\mathbf{t}_i}$.

Fig.~\ref{fig:model} illustrates the data and query model. Any counting query is a vector $\mathbf{q}$ where all coordinates are 0 or 1 and can be equivalently defined by a predicate $\pi$ such that $\inner{\mathbf{q}}{\mathbf{I}} = |\sigma_\pi(I)|$; with more abuse, we will use $\pi$ instead of $\mathbf{q}$ when referring to a counting query.  Other SQL queries can be modeled using linear queries, too. For example, \texttt{SELECT A, COUNT($*$) AS cnt FROM R GROUP BY A ORDER BY cnt DESC LIMIT 10} corresponds to several linear queries, one for each group, where the outputs are sorted and the top 10 returned.

\begin{figure}
\scriptsize
\begin{tikzpicture}
\node [mybox,minimum width=8.35cm] (box1) at (0, 0){
\begin{minipage}{0.45\textwidth}
Domains:
\vspace{-4pt}
\begin{align*}
  &D_1 = \set{a_1,a_2}  &N_1 = 2\\
  &D_2 = \set{b_1,b_2}  &N_2 = 2\\
  &Tup = \set{(a_1,b_1),(a_1,b_2),(a_2,b_1),(a_2,b_2)}  &d = 4
\end{align*}
\end{minipage}
};
\node [mybox,minimum width=8.35cm] (box2) at (0, -1.9) {
\begin{minipage}[t]{0.22\textwidth}
Database Instance:
\vspace{2pt}
\newline
\begin{tabular}{l|l|l|} \cline{2-3}
$I$:  & $A$ & $B$ \\ \cline{2-3}
1 & $a_1$ & $b_1$ \\
2 & $a_1$ & $b_2$ \\
3 & $a_2$ & $b_2$ \\
4 & $a_1$ & $b_1$ \\
5 & $a_2$ & $b_2$ \\ \cline{2-3}
\end{tabular}
\end{minipage}
\begin{minipage}[t]{0.22\textwidth}
Query: 
\vspace{6pt}
\newline
\begin{tabular}{rl}
\texttt{q:} & \texttt{SELECT COUNT(*)} \\
            & \texttt{FROM R} \\
            & \texttt{WHERE A = a1}  
\end{tabular}
\end{minipage}
};
\node [mybox,minimum width=8.35cm] (box3) at (0, -3.5){
\begin{minipage}{0.45\textwidth}
Modeling Data and Query: n = 5,\ m = 2
\begin{align*}
& \mathbf{n}^{\mathbf{I}} = (2,1,0,2) ~~ \mathbf{q} = (1,1,0,0)  ~~
  \inner{\mathbf{q}}{\mathbf{n}^{\mathbf{I}}} = 3
  \mbox{\hspace{6pt}also denoted } \inner{\mathbf{q}}{\mathbf{I}} \mbox{\hspace{0.3cm}}
\end{align*}
\end{minipage}
};
\end{tikzpicture}
\caption{Illustration of the data and query model}
\label{fig:model}
\end{figure}

Our goal is to compute a summary of the data that is small yet allows us to approximatively compute the answer to any linear query.  We assume that the cardinality $n$ of $R$ is fixed and known.  In addition, we know $k$ statistics, $\Phi = \setof{(\mathbf{c}_j,s_j)}{j=1,k}$, where $\mathbf{c}_j$ is a linear query and $s_j \geq 0$ is a number.  Intuitively, the statistic $(\mathbf{c}_j,s_j)$ asserts that $\inner{\mathbf{c}_j}{I} = s_j$. For example, we can write 1-dimensional and 2-dimensional (2D) statistics like $|\sigma_{A_1 = 63}(I)| = 20$ and $|\sigma_{A_1 \in [50,99]\wedge A_2 \in [1,9]}(I)| = 300$.

Next, we derive the MaxEnt distribution for the possible instances $I$ of a fixed size $n$. We replace the exponential parameters $\theta_j$ with $\ln(\alpha_j)$ so that Eq.~(\ref{eq:pr:i}) becomes

\begin{equation}
\label{eq:pr:n}
\Pr(I) = \frac{1}{Z}\prod_{j=1,k} \alpha_j^{\inner{\mathbf{c}_j}{\mathbf{I}}}.
\end{equation}

We prove the following about the structure of the partition function $Z$:
\begin{lemma}
The partition function is given by
\begin{equation}
\label{eq:z:p}
Z = P^n
\end{equation}
where $P$ is the multi-linear polynomial
\begin{equation}
\label{eq:p}
P(\alpha_1, \ldots, \alpha_k) \eqdef \sum_{i=1,d} \prod_{j=1,k} \alpha_j^{\inner{\mathbf{c}_j}{\mathbf{t}_i}}.
\end{equation}
\end{lemma}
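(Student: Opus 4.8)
The plan is to unfold the definition
$Z = \sum_{I \in PWD}\prod_{j=1}^{k}\alpha_j^{\inner{\mathbf{c}_j}{\mathbf{I}}}$
and exploit the slotted (ordered) possible-world semantics, under which the set $PWD$ of instances of cardinality $n$ is exactly the $n$-fold Cartesian power $Tup^{n}$: an instance is a sequence $(u_1,\ldots,u_n)$ with each $u_\ell \in Tup$. The key structural fact is that the exponent in each summand of $Z$ decomposes additively over the $n$ slots, so that a sum of products over $Tup^{n}$ factors into a product of $n$ identical sums over $Tup$.

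Concretely, I would first observe that the frequency vector of an instance is the sum of the basis frequency vectors of its tuples, i.e. $\mathbf{I} = \sum_{\ell=1}^{n}\mathbf{n}^{u_\ell}$, since the count of $t_i$ in $I$ is $\sum_{\ell}[u_\ell = t_i]$. Because each statistic $\mathbf{c}_j$ is a linear query, linearity of the dot product gives $\inner{\mathbf{c}_j}{\mathbf{I}} = \sum_{\ell=1}^{n}\inner{\mathbf{c}_j}{\mathbf{n}^{u_\ell}}$. Substituting into the summand of $Z$ and using $\alpha^{a+b}=\alpha^{a}\alpha^{b}$,
\[
\prod_{j=1}^{k}\alpha_j^{\inner{\mathbf{c}_j}{\mathbf{I}}}
= \prod_{j=1}^{k}\prod_{\ell=1}^{n}\alpha_j^{\inner{\mathbf{c}_j}{\mathbf{n}^{u_\ell}}}
= \prod_{\ell=1}^{n}\left(\prod_{j=1}^{k}\alpha_j^{\inner{\mathbf{c}_j}{\mathbf{n}^{u_\ell}}}\right).
\]
Then I would sum over all $(u_1,\ldots,u_n)\in Tup^{n}$ and apply the distributive law for finite sums (a sum over a Cartesian product of a product that separates across coordinates equals the product of the per-coordinate sums):
\[
Z = \sum_{(u_1,\ldots,u_n)\in Tup^{n}}\ \prod_{\ell=1}^{n}\left(\prod_{j=1}^{k}\alpha_j^{\inner{\mathbf{c}_j}{\mathbf{n}^{u_\ell}}}\right)
= \prod_{\ell=1}^{n}\left(\sum_{t\in Tup}\prod_{j=1}^{k}\alpha_j^{\inner{\mathbf{c}_j}{\mathbf{n}^{t}}}\right).
\]
Enumerating $Tup=\{t_1,\ldots,t_d\}$, each of the $n$ factors is exactly $P(\alpha_1,\ldots,\alpha_k)=\sum_{i=1}^{d}\prod_{j=1}^{k}\alpha_j^{\inner{\mathbf{c}_j}{\mathbf{t}_i}}$, so $Z = P^{n}$, as claimed.

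I do not expect a genuine obstacle here; the one point to state explicitly is that the factorization relies on the ordered/slotted semantics, since then $PWD$ is literally $Tup^{n}$ — if instances were unordered bags the sum over $PWD$ would carry multinomial coefficients and would not factor. As a minor aside worth noting, since $\inner{\mathbf{c}_j}{\mathbf{t}_i}$ is just the $i$-th coordinate of $\mathbf{c}_j$, when the statistics are counting queries these exponents lie in $\{0,1\}$, which is why $P$ is multi-linear in the $\alpha_j$; but the identity $Z=P^{n}$ requires no such restriction.
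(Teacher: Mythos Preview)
Your proof is correct, but it takes a different route from the paper's. The paper first groups instances by their frequency vector $\mathbf{n}$: for each $\mathbf{n}$ with $\|\mathbf{n}\|_1=n$ there are $n!/\prod_i n_i!$ ordered instances sharing it, and each contributes the same weight $\prod_j \alpha_j^{\sum_i n_i \inner{\mathbf{c}_j}{\mathbf{t}_i}}$; summing over all such $\mathbf{n}$ then invokes the multinomial theorem to obtain $P^n$. You instead exploit the slotted semantics directly, writing $PWD=Tup^n$ and factoring the sum over the Cartesian power via distributivity, which bypasses the multinomial theorem entirely. Your argument is more elementary and arguably cleaner; the paper's version has the minor advantage of making the role of the frequency-vector representation explicit, which is the data model used throughout the rest of the paper. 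Your remark that unordered-bag semantics would introduce multinomial coefficients is exactly the bridge between the two arguments: the paper effectively carries those coefficients and then cancels them with the multinomial expansion, whereas you avoid introducing them in the first place.
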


\begin{proof}
Fix any $\mathbf{n} = [n_1,\ldots, n_d]$ such that $||\mathbf{n}||_1 = \sum_{i=1}^d n_i = n$.  The number of instances $I$ of cardinality $n$ with $\mathbf{n}^I = \mathbf{n}$ is $n!/\prod_i n_i!$. Furthermore, for each such instance, $\inner{\mathbf{c}_j}{\mathbf{I}} = \inner{\mathbf{c}_j}{\mathbf{n}} = \sum_i n_i\inner{\mathbf{c}_j}{\mathbf{t}_i}$.  Therefore,

\begin{align*}
Z = & \sum_I \Pr(I) = \sum_{\mathbf{n}: ||\mathbf{n}||_1 = n} \frac{n!}{\prod_{i} n_i!} \prod_{j=1,k} \alpha_j^{\sum_{i}n_i \inner{\mathbf{c}_j}{\mathbf{t}_i}} \\
= & \left(\sum_{i=1,d} \prod_{j=1,k} \alpha_j^{\inner{\mathbf{c}_j}{\mathbf{t}_i}}\right)^n = P^n.
\end{align*}
\end{proof}

The {\em data summary} consists of the polynomial $P$ (Eq.~(\ref{eq:p})) and the values of its parameters $\alpha_j$; the polynomial is defined by the  linear queries $\mathbf{c}_j$ in the statistics $\Phi$, and the parameters are computed from the numerical values $s_j$.

\begin{example}
\label{ex:simple}
Consider a relation with three attributes $R(A, B, C)$, and assume that the domain of each attribute has 2 distinct elements. Assume $n = 10$ and the only statistics in $\Phi$ are the following 1-dimensional statistics:
\begin{equation*}
\begin{array}{lll}
(A = a_1,\ 3) & (B = b_1,\ 8) & (C = c_1,\ 6) \\
(A = a_2,\ 7) & (B = b_2,\ 2) & (C = c_2,\ 4). \\
\end{array}
\end{equation*}
The first statistic asserts that $|\sigma_{A=a_1}(I)|=3$, etc. The polynomial $P$ is
\begin{align*}
P = 
&\alpha_{1}\beta_{1}\gamma_{1} + \alpha_{1}\beta_{1}\gamma_{2} + 
\alpha_{1}\beta_{2}\gamma_{1} + \alpha_{1}\beta_{2}\gamma_{2} + \\
&\alpha_{2}\beta_{1}\gamma_{1} + \alpha_{2}\beta_{1}\gamma_{2} + 
\alpha_{2}\beta_{2}\gamma_{1} + \alpha_{2}\beta_{2}\gamma_{2}
\end{align*}
where $\alpha_1,\alpha_2$ are variables associated with the statistics on $A$, $\beta_1,\beta_2$ are for $B$\footnote{We abuse notation here for readability. Technically, $\alpha_{i} = \alpha_{a_i}$, $\beta_{i} = \alpha_{b_i}$, and $\gamma_{i} = \alpha_{c_i}$.}, and $\gamma_1,\gamma_2$ are for $C$.

Consider the concrete instance
\begin{equation*}
I = \set{(a_1,b_1,c_1), (a_1, b_2,c_2), \ldots, (a_1, b_2, c_2)}
\end{equation*}
where the tuple $(a_1,b_2,c_2)$ occurs 9 times. Then, $\Pr(I) = \alpha_1^{10}\beta_1\beta_2^9\gamma_1\gamma_2^9/P^{10}$.
\end{example}

\begin{example}
\label{ex:complex1}
Continuing the previous example, we add the following multi-dimensional statistics to $\Phi$:
\begin{equation*}
\begin{array}{ll}
(A = a_1 \land B = b_1,\ 2) & (B = b_1 \land C = c_1,\ 5) \\
(A = a_2 \land B = b_2,\ 1) & (B = b_2 \land C = c_1,\ 1). \\
\end{array}
\end{equation*}
$P$ is now
\begin{align}
\label{eq:pnew}
P = 
&\alpha_{1}\beta_{1}\gamma_{1}\mathcolor{red}{[\alpha\beta]_{1,1}}\mathcolor{red}{[\beta\gamma]_{1,1}} + \alpha_{1}\beta_{1}\gamma_{2}\mathcolor{red}{[\alpha\beta]_{1,1}} + \nonumber \\
&\alpha_{1}\beta_{2}\gamma_{1}\mathcolor{red}{[\beta\gamma]_{2,1}} + \alpha_{1}\beta_{2}\gamma_{2} + \nonumber \\
&\alpha_{2}\beta_{1}\gamma_{1}\mathcolor{red}{[\beta\gamma]_{1,1}} + \alpha_{2}\beta_{1}\gamma_{2} + \nonumber \\
&\alpha_{2}\beta_{2}\gamma_{1}\mathcolor{red}{[\alpha\beta]_{2,2}}\mathcolor{red}{[\beta\gamma]_{2,1}} + \alpha_{2}\beta_{2}\gamma_{2}\mathcolor{red}{[\alpha\beta]_{2,2}}.
\end{align}
The red variables are the added 2-dimensional statistic variables; we use $[\alpha\beta]_{1,1}$ to denote a {\em single} variable corresponding to a 2D statistics on the attributes $AB$. Notice that each  red variable only occurs with its related 1-dimensional variables. $\alpha\beta_{1,1}$, for example, is only in the same term as $\alpha_{1}$ and $\beta_{1}$.  

Now consider the earlier instance $I$. Its probability becomes $\Pr(I) = \alpha_1^{10}\beta_1\beta_2^9\gamma_1\gamma_2^9[\alpha\beta]_{1,1}[\beta\gamma]_{1,1}/P^{10}$.
\end{example}

To facilitate analytical queries, we choose the set of statistics $\Phi$ as follows:
\begin{myitemize}
\item Each statistic $\phi_j=(\mathbf{c}_j,s_j)$ is associated with some predicate $\pi_j$ such that $\inner{\mathbf{c}_j}{\mathbf{I}} = |\sigma_{\pi_j}(I)|$. It follows that for every tuple $t_i$, $\inner{\mathbf{c}_j}{\mathbf{t}_i}$ is either 0 or 1; therefore, each variable $\alpha_j$ has degree 1 in the polynomial $P$ in Eq.~(\ref{eq:p}). 

\item For each domain $D_i$, we include a complete set of 1-dimensional statistics in our summary. In other words, for each $v \in D_i$, $\Phi$ contains one statistic with predicate $A_i = v$.  We denote $J_i \subseteq [k]$ the set of indices of the 1-dimensional statistics associated with $D_i$; therefore, $|J_i| = |D_i| = N_i$.

\item We allow multi-dimensional statistics to be given by arbitrary predicates. They may be overlapping and/or incomplete; \eg, one statistic may count the tuples satisfying $A_1  \in [10,30] \wedge A_2 = 5$ and another count the tuples satisfying $A_2 \in [20, 40] \wedge A_4=20$. 

\item We assume the number of 1-dimensional statistics dominates the number of attribute combinations; i.e., $\sum_{i=1}^m N_i \gg 2^m$.


\item If some domain $D_i$ is large, it is beneficial to reduce the
  size of the domain using equi-width buckets. In that case, we assume the elements of $D_i$ represent buckets, and $N_i$ is the number of buckets.

\item We enforce our MaxEnt distribution to be {\em overcomplete}~\cite[pp.40]{wainwright2008GME} (as opposed to {\em minimal}). More precisely, for any attribute $A_i$ and any instance $I$, we have $\sum_{j \in J_i} \inner{\mathbf{c}_j}{\mathbf{I}} = n$, which means that some statistics are redundant since they can be computed from the others and from the size of the instance $n$.
\end{myitemize}

Note that as a consequence of overcompleteness, for any attribute $A_i$, one can write $P$ as a linear expression

\begin{equation}
\label{eq:p1i}
P = \sum_{j \in J_i} \alpha_j P_j
\end{equation}
where each $P_j$, $j \in J_i$ is a polynomial that does not contain the variables $(\alpha_j)_{j \in J_i}$. In Example~\ref{ex:complex1}, the 1-dimensional variables for $A$ are $\alpha_1$, $\alpha_2$, and indeed, each monomial in Eq.~(\ref{eq:pnew}) contains exactly one of these variables. One can write $P$ as $P = \alpha_1 P_1 + \alpha_2 P_2$ where $\alpha_1 P_1$ represents the first two lines and $\alpha_2 P_2$ represents the last two lines in Eq.~(\ref{eq:pnew}). $P$ is also linear in $\beta_1$, $\beta_2$ and in $\gamma_1$, $\gamma_2$.

\subsection{Query Answering}
\label{subsec:query:answer}
In this section, we show how to use the data summary to approximately answer a linear query $q$ by returning its expected value $\E[\inner{\mathbf{q}}{I}]$. The summary (the polynomial $P$ and the values of its variables $\alpha_j$) uniquely define a probability space on the possible worlds (Eq.~(\ref{eq:pr:n}) and (\ref{eq:p})). We start with a well known result in the MaxEnt model. If $\mathbf{c}_\ell$ is the linear query associated with the variable $\alpha_\ell$, then

\begin{equation}
\label{eq:ec}
  \E[\inner{\mathbf{c}_\ell}{\mathbf{I}}] = \frac{n}{P}\frac{\alpha_\ell\partial P}{\partial \alpha_\ell}.
\end{equation}
We review the proof here. The expected value of $\inner{\mathbf{c}_\ell}{\mathbf{I}}$ over the probability space (Eq.~(\ref{eq:pr:n})) is

\begin{align*}
  \E[\inner{\mathbf{c}_\ell}{\mathbf{I}}] = &\frac{1}{P^n} \sum_{\mathbf{I}} \inner{\mathbf{c}_\ell}{\mathbf{I}}\prod_j \alpha_j^{\inner{\mathbf{c}_j}{\mathbf{I}}}
 =  \frac{1}{P^n} \sum_{\mathbf{I}} \frac{\alpha_\ell \partial}{\partial \alpha_\ell} \prod_j \alpha_j^{\inner{\mathbf{c}_j}{\mathbf{I}}} \\
 = & \frac{1}{P^n} \frac{\alpha_\ell \partial}{\partial \alpha_\ell} \sum_{\mathbf{I}} \prod_j \alpha_j^{\inner{\mathbf{c}_j}{\mathbf{I}}}
 =  \frac{1}{P^n} \frac{\alpha_\ell \partial P^n}{\partial \alpha_\ell} = \frac{n}{P}\frac{\alpha_\ell\partial P}{\partial \alpha_\ell}.
\end{align*}

To compute a new linear query $\mathbf{q}$, we add it to the statistical queries $c_j$, associate it with a fresh variable $\beta$,
and denote $P_{\mathbf{q}}$ the extended polynomial:

\begin{align}
  P_{\mathbf{q}}(\alpha_1, \ldots, \alpha_k,\beta) \eqdef \sum_{i=1,d} \prod_{j=1,k} \alpha_j^{\inner{\mathbf{c}_j}{\mathbf{t}_i}}\beta^{\inner{\mathbf{q}}{\mathbf{t}_i}}\label{eq:pq}
\end{align}

Notice that $P_{\mathbf{q}}[\beta=1] \equiv P$; therefore, the extended data summary defines the same probability space as $P$. We can apply Eq.~(\ref{eq:ec}) to the query $\mathbf{q}$ to derive:

\begin{equation}
\label{eq:eq}
  \E[\inner{\mathbf{q}}{\mathbf{I}}] = \frac{n}{P}\frac{\partial P_{\mathbf{q}}}{\partial \beta}.
\end{equation}

This leads to the following na\"{i}ve strategy for computing the expected value of $\mathbf{q}$: extend $P$ to obtain $P_{\mathbf{q}}$ and apply formula Eq.~(\ref{eq:eq}). One way to obtain $P_{\mathbf{q}}$ is to iterate over all monomials in $P$ and add $\beta$ to the monomials corresponding to tuples counted by $\mathbf{q}$. As this is inefficient, Sec.~\ref{subsec:opt:aqp} describes how to avoid modifying the polynomial altogether.

\subsection{Probabilistic Model Computation}
\label{subsec:solving}
We now describe how to compute the parameters of the summary. Given the statistics $\Phi = \setof{(\mathbf{c}_j,s_j)}{j=1,k}$, we need to find values of the variables $\setof{\alpha_j}{j=1,k}$ such that $\E[\inner{\mathbf{c}_j}{\mathbf{I}}] = s_j$ for all $j=1,k$. As explained in Sec~\ref{sec:background}, this is equivalent to maximizing the dual function $\Psi$:

\begin{equation}
\label{eq:dual}
\Psi \eqdef \sum_{j=1}^k s_j \ln(\alpha_j) - n \ln P.
\end{equation}

Indeed, maximizing $P$ reduces to solving the equations $\partial \Psi/\partial \alpha_j = 0$ for all $j$. Direct calculation gives us $\partial \Psi/\partial \alpha_j = \frac{s_j}{\alpha_j} - \frac{n}{P}\frac{\partial P}{\partial \alpha_j} = 0$, which is equivalent to $s_j - \E[\inner{\mathbf{c}_j}{\mathbf{I}}]$ by Eq.~(\ref{eq:ec}). The dual function $\Psi$ is concave, and hence it has a single maximum value that can be obtained using convex optimization techniques such as Gradient Descent.

In particular, we achieve fastest convergence rates using a variant of Stochastic Gradient Descent (SGD) called Mirror Descent~\cite{convex-optimization-algorithms-complexity},  where each iteration chooses some $j=1,k$ and updates $\alpha_j$ by solving $\frac{n \alpha_j}{P}\frac{\partial P}{\partial \alpha_j} = s_j$ while keeping all other parameters fixed. In other words, the step of SGD is chosen to solve $\partial \Psi/\partial \alpha_j = 0$. Denoting $P_{\alpha_{j}} \eqdef \frac{\partial P}{\partial \alpha_j}$ and solving, we obtain:
\begin{equation}
\label{eq:update_step} 
\alpha_{j} = \frac{s_j(P - {\alpha_{j}}P_{\alpha_{j}})}{(n-s_j)P_{\alpha_{j}}}.
\end{equation}
Since $P$ is linear in each $\alpha$, neither $P - {\alpha_{j}}P_{\alpha_{j}}$ nor $P_{\alpha_{j}}$ contain any $\alpha_{j}$ variables.

We repeat this for all $j$, and continue this process until all
differences $|s_j - \frac{n\alpha_{j}P_{\alpha_{j}}}{P}|$, $j=1,k$,
are below some threshold. Algorithm~\ref{alg:solve} shows pseudocode for the solving process.
\begin{small}
\begin{algorithm}[t]
\caption{Solving for the $\alpha$s}
\label{alg:solve}
\begin{lstlisting}[style=myJava] 
maxError = infinity
while maxError >= threshold do
  maxError = -1
  for each alpha do
      value = *@$\frac{s_j(P - {\alpha_{j}}P_{\alpha_{j}})}{(n-s_j)P_{\alpha_{j}}}$@*
      alpha = value
      error = *@$value - \frac{n\alpha_{j}P_{\alpha_{j}}}{P}$@*
      maxError = max(error, maxError)
\end{lstlisting}
\end{algorithm}
\end{small}

\section{Optimizations}
\label{sec:optimizations}
We now discuss three optimizations: (1) summary compression in Sec.~\ref{subsec:compress}, (2) optimized query processing in Sec.~\ref{subsec:opt:aqp}, and (3) selection of statistics in Sec.~\ref{subsec:stat:selection}.

\subsection{Compression of the Data Summary}
\label{subsec:compress}

The summary consists of the polynomial $P$ that, by definition, has $|Tup|$ monomials where $|Tup| = \prod_{i=1}^m N_i$. We describe a technique that compresses the summary to a size closer to $O(\sum_i N_i)$.

We start by walking through an example with three attributes, $A$, $B$, and $C$, each with an active domain of size $N_1=N_2=N_3=1000$. Suppose first that we have only 1D statistics.  Then, instead of representing $P$ as a sum of $1000^3$ monomials, $P = \sum_{i,j,k \in [1000]} \alpha_i\beta_j \gamma_k$, we factorize it to $P = (\sum \alpha_i)(\sum \beta_j)(\sum \gamma_k)$; the new representation has size $3 \cdot 1000$.

Now, suppose we add a single 3D statistic on $ABC$: $A = 3 \wedge B = 4 \wedge C=5$.  The new variable, call it $\delta$, occurs in a single monomial of $P$, namely $\alpha_3\beta_4\gamma_5\delta$.  Thus, we can compress $P$ to $(\sum \alpha_i)(\sum \beta_j)(\sum \gamma_k) + \alpha_3\beta_4\gamma_5(\delta-1)$.

Instead, suppose we add a single 2D range statistics on $AB$, say $A \in [101, 200] \wedge B \in [501, 600]$, and call its associated variable $\delta_1$.  This will affect $100 \cdot 100 \cdot 1000$ monomials.  We can avoid enumerating them by noting that they, too, factorize.  The polynomial compresses to $(\sum \alpha_i)(\sum \beta_j)(\sum \gamma_k) + (\sum_{i=101}^{200}\alpha_i)(\sum_{j=501}^{600} \beta_j)(\sum \gamma_k)(\delta_1-1)$.

Finally, suppose we have three 2D statistics: the previous one on $AB$ plus the statistics $B \in [551, 650] \wedge C \in [801, 900]$ and $B \in [651, 700] \wedge C \in [701, 800]$ on $BC$.  Their associated variables are $\delta_1$, $\delta_2$, and $\delta_3$.  Now we need to account for the fact that $100\cdot 50 \cdot 100$ monomials contain both $\delta_1$ and $\delta_2$.  Applying the inclusion/exclusion principle, $P$ compresses to the following (the \textbf{i} and \textbf{ii} labels are referenced later).

\begin{small}
  \begin{align}
  \label{eq:ex:p}
   P= & \overbrace{(\sum \alpha_i)(\sum \beta_j)(\sum \gamma_k)}^{(\textbf{i})}
    + \overbrace{(\sum \gamma_k)}^{(\textbf{i})}\overbrace{(\sum_{101}^{200}\alpha_i)(\sum_{501}^{600} \beta_j)(\delta_1-1)}^{(\textbf{ii})}\\
    + &\overbrace{(\sum \alpha_i)}^{(\textbf{i})} \overbrace{\left[(\sum_{551}^{650} \beta_j)(\sum_{801}^{900} \gamma_k)(\delta_2-1)+(\sum_{651}^{700} \beta_j)(\sum_{701}^{800} \gamma_k)(\delta_3-1)\right]}^{(\textbf{ii})}\\
    + &\overbrace{(\sum_{101}^{200} \alpha_i)(\sum_{551}^{600} \beta_j)(\sum_{801}^{900} \gamma_k)(\delta_1-1)(\delta_2-1)}^{(\textbf{ii})}.
  \end{align}
\end{small}

\noindent The size, counting only the $\alpha$s, $\beta$s, and $\gamma$s for simplicity, is $3000 + 1200 + 1350 + 250$.

Before proving the general formula for $P$, note that this compression is related to standard algebraic factorization techniques involving kernel extraction and rectangle coverings \cite{hosangadi2004factoringae}; both techniques reduce the size of a polynomial by factoring out divisors. The standard techniques, however, are unsuitable for our use because they require enumeration of the product terms in the sum-of-product (SOP) polynomial to extract kernels and form cube matrices. Our polynomial in SOP form is too large to be materialized, making these techniques infeasible. It is future work to investigate other factorization techniques geared towards massive polynomials.

We now make the following three assumptions for the rest of the paper.
\begin{myitemize}
  \item Each predicate has the form $\pi_j = \bigwedge_{i=1}^m \rho_{ij}$ where $m$ is the number of attributes and $\rho_{ij}$ is the projection of $\pi_j$ onto $A_i$. If $j \in J_i$, then $\pi_j \equiv \rho_{ij}$. For any set of indices of multi-dimensional statistics $S \subset [k]$, we denote $\rho_{iS} \eqdef \bigwedge_{j \in S} \rho_{ij}$, and $\pi_S \eqdef \bigwedge_i \rho_{iS}$; as usual, when $S =\emptyset$, then $\rho_{i\emptyset} \equiv \texttt{true}$.

  \item Each $\rho_{ij}$ is a range predicate $A_i \in [u,v]$.

  \item For each $\mathcal{I}$, the multi-dimensional statistics whose attributes are exactly those in $\mathcal{I}$ are disjoint; i.e., for $j_1$, $j_2$ whose attributes are $\mathcal{I}$, $\rho_{ij} \not\equiv \texttt{true}$ for $i \in \mathcal{I}$, $\rho_{ij} \equiv \texttt{true}$ for $i \not\in \mathcal{I}$, and $\pi_{j_1} \wedge \pi_{j_2} \equiv \texttt{false}$.
\end{myitemize}

Using this, define $J_{\mathcal{I}} \subseteq \mathcal{P}([k])$\footnote{$\mathcal{P}([k])$ is the power set of $\set{1, 2, \ldots, k}$} for $\mathcal{I} \subseteq [m]$ to be the set of sets of multi-dimensional statistics whose {\em combined} attributes are $\setof{A_i}{i \in \mathcal{I}}$ and whose intersection is non-empty (i.e., not \texttt{false}). In other words, for each $S \in J_{\mathcal{I}}$, $\rho_{iS} \not\in \{\texttt{true}, \texttt{false}\}$ for $i \in \mathcal{I}$ and $\rho_{iS} \equiv \texttt{true}$ for $i \notin \mathcal{I}$.

For example, suppose we have the three 2D statistics from before: $\pi_{j_1} = A_1 \in [101, 200] \wedge A_2 \in [501, 600]$, $\pi_{j_2} = A_2 \in [551, 650] \wedge A_3 \in [801, 900]$, and $\pi_{j_3} = A_2 \in [651, 700] \wedge A_3 \in [701, 800]$. Then, $\{j_1\} \in J_{\{1, 2\}}$ and $\{j_2\}, \{j_3\} \in J_{\{2, 3\}}$. Further, $\{j_1, j_2\} \in J_{\{1, 2, 3\}}$ because $\rho_{2j_1} \wedge \rho_{2j_2} \not\equiv \texttt{false}$. However, $\{j_1, j_3\} \notin J_{\{1, 2, 3\}}$ because $\rho_{2j_1} \wedge \rho_{2j_3} \equiv \texttt{false}$. Using these definitions, we now give the compression.

\begin{theorem} \label{th:compress} The polynomial $P$ is equivalent to:
\begin{small}
  \begin{align*}
    P = &\sum_{\mathcal{I} \subseteq [m]}\overbrace{\left(\prod_{i \notin \mathcal{I}} \sum_{j \in J_i} \alpha_j \right)}^{(\textbf{i})} \\
    &\underbrace{\left(\sum_{S \in J_{\mathcal{I}}}\left(\prod_{i \in \mathcal{I}} \sum_{\substack{j \in J_i \\ \pi_j \land \rho_{iS} \not\equiv \texttt{false}}} \alpha_j \right)\left(\prod_{j \in S}(\alpha_j - 1) \right) \right)}_{(\textbf{ii})}
  \end{align*}
\end{small}
\end{theorem}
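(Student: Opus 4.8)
The plan is to start from the sum-of-products definition of $P$ in Eq.~(\ref{eq:p}), namely $P = \sum_{i=1,d}\prod_{j=1,k}\alpha_j^{\inner{\mathbf{c}_j}{\mathbf{t}_i}}$, and reorganize the sum over tuples $t_i \in Tup$ according to which multi-dimensional statistics each tuple satisfies. Because every predicate factors as $\pi_j = \bigwedge_{i=1}^m \rho_{ij}$ over attributes and every $\rho_{ij}$ is a range predicate, a tuple $t = (v_1,\dots,v_m)$ satisfies $\pi_j$ iff $v_i \in \rho_{ij}$ for all $i$. For a fixed tuple $t$, the monomial contributed to $P$ is $\left(\prod_{i=1}^m \alpha_{j(i,v_i)}\right)\cdot\prod_{j \in T(t)} \alpha_j$, where $\alpha_{j(i,v_i)}$ is the unique 1-dimensional variable from $J_i$ with $v_i$ in its range, and $T(t) \subseteq [k]$ is the set of indices of multi-dimensional statistics satisfied by $t$. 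So $P = \sum_{t \in Tup} \left(\prod_i \alpha_{j(i,v_i)}\right)\prod_{j \in T(t)}\alpha_j$.

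The key algebraic move is inclusion–exclusion: replace $\prod_{j \in T(t)}\alpha_j$ by $\sum_{S \subseteq T(t)} \prod_{j \in S}(\alpha_j - 1)$, which is an identity since $\sum_{S \subseteq U}\prod_{j\in S}(\alpha_j-1) = \prod_{j\in U}\big((\alpha_j-1)+1\big) = \prod_{j\in U}\alpha_j$. Substituting and swapping the order of summation, $P = \sum_{t\in Tup}\left(\prod_i \alpha_{j(i,v_i)}\right)\sum_{S\subseteq T(t)}\prod_{j\in S}(\alpha_j-1) = \sum_{S}\left(\prod_{j\in S}(\alpha_j-1)\right)\sum_{t:\,S\subseteq T(t)}\prod_i \alpha_{j(i,v_i)}$, where $S$ ranges over subsets of $[k]$ that are simultaneously satisfiable (else the inner tuple-sum is empty). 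Now I would classify each such $S$ by its set of attributes $\mathcal{I} = \mathcal{I}(S)$, i.e. the set of $i$ for which some $j \in S$ has $\rho_{ij}\not\equiv\texttt{true}$; this is exactly the condition $S \in J_{\mathcal{I}}$ from the definition preceding the theorem (satisfiability of $S$ gives the non-\texttt{false} requirement, and disjointness of same-attribute statistics is what makes $\mathcal{I}(S)$ well-behaved, ruling out pairs like $\{j_1,j_3\}$ in the running example). Finally, the inner sum $\sum_{t:\,S\subseteq T(t)}\prod_i \alpha_{j(i,v_i)}$ factors over attributes: for $i \notin \mathcal{I}$ the value $v_i$ is unconstrained, contributing $\sum_{j\in J_i}\alpha_j$; for $i \in \mathcal{I}$, $v_i$ must lie in $\rho_{iS} = \bigwedge_{j\in S}\rho_{ij}$, equivalently in every 1-dim bucket $j \in J_i$ with $\pi_j \wedge \rho_{iS}\not\equiv\texttt{false}$, contributing $\sum_{j\in J_i,\ \pi_j\wedge\rho_{iS}\not\equiv\texttt{false}}\alpha_j$. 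Grouping over $\mathcal{I}\subseteq[m]$ and then over $S\in J_{\mathcal{I}}$ yields exactly the claimed formula, with part $(\textbf{i})$ the unconstrained factors and part $(\textbf{ii})$ the constrained factors times $\prod_{j\in S}(\alpha_j-1)$ (the $S=\emptyset$ / $\mathcal{I}=\emptyset$ term recovering the fully factored $\prod_i\sum_{j\in J_i}\alpha_j$).

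The main obstacle is the bookkeeping around the definition of $J_{\mathcal{I}}$: I need to check carefully that every satisfiable $S$ contributing a nonzero term lands in exactly one $J_{\mathcal{I}}$ (so the outer sums over $\mathcal{I}$ and $S\in J_{\mathcal{I}}$ partition the relevant subsets without double counting), and that the third assumption — disjointness of multi-dimensional statistics sharing the same attribute set — is precisely what prevents $\rho_{iS}$ from collapsing to \texttt{false} in a way that would misclassify $S$ or silently drop terms; for a range predicate $\rho_{iS}$ is an intersection of intervals, hence again an interval (possibly empty), so the "$\pi_j\wedge\rho_{iS}\not\equiv\texttt{false}$" buckets are genuinely a contiguous block, matching the $\sum_{101}^{200}\alpha_i$-style factors in Eq.~(\ref{eq:ex:p}). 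Everything else is the routine inclusion–exclusion identity and distributing a product of sums.
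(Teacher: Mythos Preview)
Your proposal is correct. The paper does not actually give a proof of this theorem; it only states that ``the proof uses induction on the size of $\mathcal{I}$, but we omit it for lack of space.'' So your argument is necessarily a different route, and in fact it is cleaner than what an inductive proof would likely be: the identity $\prod_{j\in U}\alpha_j=\sum_{S\subseteq U}\prod_{j\in S}(\alpha_j-1)$ explains in one line where the $(\alpha_j-1)$ factors come from, and the swap of summation followed by the per-attribute factorization of $\sum_{t:\,S\subseteq T(t)}\prod_i\alpha_{j(i,v_i)}$ directly yields parts $(\textbf{i})$ and $(\textbf{ii})$. An inductive proof on $|\mathcal{I}|$ would presumably peel off one attribute (or one layer of multi-dimensional statistics) at a time and verify the formula incrementally; that works but obscures the combinatorial origin of the expression.

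Two small points of bookkeeping to tighten. First, when you swap the order of summation you write ``$S$ ranges over subsets of $[k]$''; strictly, $S$ ranges over subsets of the \emph{multi-dimensional} statistic indices, since $T(t)$ by your own definition excludes the 1-dimensional ones (you already split those off as $\alpha_{j(i,v_i)}$). Second, your remark that the disjointness assumption is ``precisely what prevents $\rho_{iS}$ from collapsing to \texttt{false}'' and rules out $\{j_1,j_3\}$ in the running example is not quite right: $\{j_1,j_3\}$ is excluded because their $B$-ranges $[501,600]$ and $[651,700]$ are disjoint, not because of the third assumption ($j_1$ is on $AB$ and $j_3$ is on $BC$, so they do not share the same attribute set). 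The disjointness assumption instead rules out sets like $\{j_2,j_3\}$. More importantly, your inclusion--exclusion argument is valid regardless of that assumption; the assumption only controls the \emph{size} of $J_{\mathcal{I}}$, not the correctness of the identity. None of this affects the validity of your plan.
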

The proof uses induction on the size of $\mathcal{I}$, but we omit it for lack of space.

To give intuition, when $\mathcal{I} = \emptyset$, we get the sum over the 1D statistics because when $S = \emptyset$, $(\textbf{ii})$ equals 1. When $\mathcal{I}$ is not empty, $(\textbf{ii})$ has one summand for each set $S$ of multi-dimensional statistics whose attributes are $\mathcal{I}$ and whose intersection is non-empty. For each such $S$, the summand sums up all 1-dimensional variables $\alpha_j$, $j\in J_i$ that are in the $i$th projection of the predicate $\pi_S$ (this is what the condition $(\pi_j \wedge \rho_{iS})\not\equiv\texttt{false}$ checks) and multiplies with terms $\alpha_j-1$ for $j \in S$.

At a high level, our algorithm computes the compressed representation of $P$ by first computing the summand for when $I = \emptyset$ by iterating over all 1-dimensional statistics. It then iterates over the multi-dimensional statistics, and builds a map from $\mathcal{I}$ to the attributes that are defined on $\mathcal{I}$; i.e., $\mathcal{I} \rightarrow J_{\mathcal{I}}$ such that $|S| = 1$ for $S \in J_{\mathcal{I}}$. It then iteratively loops over this map, taking the cross product of different values, $J_{\mathcal{I}}$ and $J_{\mathcal{I'}}$, to see if any new $J_{\mathcal{I \cup I'}}$ can be generated. If so, $J_{\mathcal{I \cup I'}}$ is added to the map. Once done, it iterates over the keys in this map to build the summands for each $\mathcal{I}$.

The algorithm can be used during query answering to compute the compressed representation of $P_{\mathbf{q}}$ from $P$ (Sec.~\ref{subsec:query:answer}) by rebuilding \textbf{ii} for the new $\mathbf{q}$. However, as this is inefficient and may increase the size of our polynomial, our system performs query answering differently, as explained in the next section.

We now analyze the size of the compressed polynomial $P$. Let $B_a$ denote the number of non-empty $J_{\mathcal{I}}$; i.e., the number of unique multi-dimensional attribute sets. Since $B_a < 2^m$ and $\sum_{i=1}^m N_i \gg 2^m$, $B_a$ is dominated by $\sum_{i=1}^m N_i$. For some $\mathcal{I}$, part $(\textbf{i})$ of the compression is $O(\sum_{i=1}^m N_i)$. Part $(\textbf{ii})$ of the compression is more complex. For some $S \in J_{\mathcal{I}}$, the summand is of size $O(\sum_{i=1}^m N_i + |S|)$. As $|S| \leq B_a \ll \sum_{i=1}^m N_i$, the summand is only $O(\sum_{i=1}^m N_i)$. Putting it together, for some $\mathcal{I}$, we have the size is $O(\sum_{i=1}^m N_i + |J_{\mathcal{I}}|\sum_{i=1}^m N_i) = O(|J_{\mathcal{I}}|\sum_{i=1}^m N_i)$.

$|J_{\mathcal{I}}|$ is the number of sets of multi-dimensional statistics whose {\em combined} attributes are $\setof{A_i}{i \in \mathcal{I}}$ and whose intersection is non-empty. A way to think about this is that each $A_i$ defines a dimension in $|\mathcal{I}|$-dimensional space. Each $S \in J_{\mathcal{I}}$ defines a rectangle in this hyper-space. This means $|J_{\mathcal{I}}|$ is the number of rectangle coverings defined by the statistics over $\setof{A_i}{i \in \mathcal{I}}$. If we denote $R = \max_{\mathcal{I}}|J_{\mathcal{I}}|$, then the size of the summand is $O(R\sum_{i=1}^m N_i)$.

Further, although there are $2^m$ possible $\mathcal{I}$, $J_{\mathcal{I}}$ is non-empty for only $B_a + 1$ $\mathcal{I}$ (the 1 is from $\mathcal{I} = \emptyset$). Therefore, the size of the compression is $O(B_a R \sum_{i=1}^m N_i)$.

\begin{theorem}
The size of the polynomial is $O(B_a R\sum_{i=1}^m N_i)$ where $B_a$ is the number of unique multi-dimensional attribute sets and $R$ is the largest number of rectangle coverings defined by the statistics over some $\mathcal{I}$.
\end{theorem}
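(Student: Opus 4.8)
The plan is to read the size bound directly off the closed form for $P$ given in Theorem~\ref{th:compress}, so the statement is essentially a bookkeeping corollary of that theorem. The first thing to pin down is the size convention: we measure size by the number of occurrences of the $1$-dimensional variables $\alpha_j$ (the $\alpha$s, $\beta$s, $\gamma$s of the running examples), and we charge a factored subexpression the \emph{sum} — not the product — of the sizes of its immediate subexpressions, and a sum of subexpressions the sum of their sizes. Under this convention the whole estimate reduces to counting over the three nested layers of the formula in Theorem~\ref{th:compress}: the outer sum over $\mathcal{I} \subseteq [m]$, the inner sum over $S \in J_{\mathcal{I}}$, and the innermost products of $1$-dimensional sums $\sum_{j \in J_i} \alpha_j$.

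First I would bound the innermost pieces. Each factor $\sum_{j \in J_i} \alpha_j$, or its restricted variant $\sum_{j \in J_i,\ \pi_j \wedge \rho_{iS} \not\equiv \texttt{false}} \alpha_j$, has at most $|J_i| = N_i$ occurrences. Hence, for a fixed $\mathcal{I}$, part $(\textbf{i})$ — a product of at most $m$ such factors ranging over $i \notin \mathcal{I}$ — has size at most $\sum_{i \notin \mathcal{I}} N_i \le \sum_{i=1}^m N_i$. Inside part $(\textbf{ii})$, for a fixed $S \in J_{\mathcal{I}}$ the product $\prod_{i \in \mathcal{I}} \sum_{j} \alpha_j$ has size at most $\sum_{i \in \mathcal{I}} N_i \le \sum_{i=1}^m N_i$, and the tail $\prod_{j \in S}(\alpha_j - 1)$ contributes a further $|S|$ symbols; since $|S| \le B_a$ and we assume $B_a \ll \sum_i N_i$, each such summand is still $O(\sum_{i=1}^m N_i)$.

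Next I would aggregate. Part $(\textbf{ii})$ for a fixed $\mathcal{I}$ is a sum of $|J_{\mathcal{I}}|$ such summands, so its size is $O(|J_{\mathcal{I}}| \sum_i N_i) = O(R \sum_i N_i)$ by definition of $R = \max_{\mathcal{I}} |J_{\mathcal{I}}|$; and since $(\textbf{i}) \cdot (\textbf{ii})$ is a single product, the size of the $\mathcal{I}$-th summand is the sum of the sizes of $(\textbf{i})$ and $(\textbf{ii})$, which is $O(R \sum_i N_i)$. Finally, although the outer sum formally ranges over all $2^m$ subsets $\mathcal{I}$, the summand is nontrivial only when $J_{\mathcal{I}} \ne \emptyset$, plus the $\mathcal{I} = \emptyset$ term (whose size is just $O(\sum_i N_i)$), which is $B_a + 1$ values of $\mathcal{I}$ in all; summing $O(R \sum_i N_i)$ over these yields the claimed $O(B_a R \sum_{i=1}^m N_i)$.

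There is no real obstacle here once Theorem~\ref{th:compress} is in hand: the argument is pure accounting. The one point that must be stated carefully is the size convention itself — under a different representation, e.g. expanding $P$ into sum-of-products form, the bound would be false — together with the explicit invocation of the standing assumptions $\sum_i N_i \gg 2^m$ and $|S| \le B_a$, which is exactly what lets the lower-order additive terms $+|S|$ and the count $B_a + 1$ be absorbed cleanly into the stated asymptotics.
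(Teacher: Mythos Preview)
Your proposal is correct and follows essentially the same argument the paper gives in the paragraphs immediately preceding the theorem: bound part~$(\textbf{i})$ by $O(\sum_i N_i)$, each $S$-summand in part~$(\textbf{ii})$ by $O(\sum_i N_i + |S|) = O(\sum_i N_i)$ via $|S|\le B_a$, aggregate over $|J_{\mathcal{I}}|\le R$ summands, and finally sum over the $B_a+1$ nontrivial choices of $\mathcal{I}$. Your explicit articulation of the size convention (counting occurrences of $1$-dimensional variables in the factored form, with products charged additively) is a useful addition that the paper leaves implicit.
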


In the worst case, if one gathers all possible multi-dimensional statistics, this compression will be worse than the uncompressed polynomial, which is of size $O(\prod_{i = 1}^m N_i)$. However, in practice, $B_a < m$, and $R$ is dependent on the number and type of statistics collected and results in a significant reduction of polynomial size to one closer to $O(\sum_{i=1}^m N_i)$ (see Fig.~\ref{fig:kd_methodcompare} discussion).

\subsection{Optimized Query Answering}
\label{subsec:opt:aqp}
In this section, we assume that the query $\mathbf{q}$ is a counting query defined by a conjunction of predicates, one over each attribute $A_i$; \ie, $\mathbf{q} = |\sigma_\pi(I)|$, where
\begin{equation}
\label{eq:piq}
\pi = \rho_1 \wedge \cdots \wedge \rho_m 
\end{equation}

and $\rho_i$ is a predicate over the attribute $A_i$. If $\mathbf{q}$ ignores $A_i$, then we simply set $\rho_i \equiv \texttt{true}$.  Our goal is to compute $\E[\inner{\mathbf{q}}{\mathbf{I}}]$. In Sec.~\ref{subsec:query:answer}, we described a direct approach that consists of constructing a new polynomial $P_{\mathbf{q}}$ and returning Eq.~(\ref{eq:eq}). However, as described in Sec.~\ref{subsec:query:answer} and Sec.~\ref{subsec:compress}, this may be expensive.

We describe here an optimized approach to compute $\E[\inner{\mathbf{q}}{\mathbf{I}}]$ directly from $P$. The advantage of this method is that it does not require any restructuring or rebuilding of the polynomial. Instead, it can use any optimized oracle for evaluating $P$ on given inputs. Our optimization has two parts: a new formula $\E[\inner{\mathbf{q}}{\mathbf{I}}]$ and a new formula for derivatives.

{\bf New formula for $\E[\inner{\mathbf{q}}{\mathbf{I}}]$:} Let $\pi_j$ be the predicate associate to the $j$th statistical query. In other words, $\inner{\mathbf{c}_j}{\mathbf{I}} = |\sigma_{\pi_j}(\mathbf{I})|$. The next lemma applies to any query $\mathbf{q}$ defined by some predicate $\pi$. Recall that $\beta$ is the new variable associated to $\mathbf{q}$ in $P_{\mathbf{q}}$ (Sec.~\ref{subsec:query:answer}).

\begin{lemma}\label{lemma:query_derivative} For any $\ell$ variables $\alpha_{j_1}, \ldots, \alpha_{j_\ell}$ of $P_{\mathbf{q}}$:

  (1) If the logical implication $\pi_{j_1} \wedge \cdots \wedge \pi_{j_\ell} \Rightarrow \pi$ holds, then
  \begin{align}
    \label{eq:aux}
    \frac{\alpha_{j_1}\cdots \alpha_{j_\ell}\partial^\ell P_{\mathbf{q}}}{\partial \alpha_{j_1} \cdots \partial \alpha_{j_\ell}} = &
    \frac{\alpha_{j_1}\cdots \alpha_{j_\ell} \beta \partial^{\ell+1} P_{\mathbf{q}}}{\partial \alpha_{j_1} \cdots \partial \alpha_{j_\ell}\partial \beta}
  \end{align}

  (2) If the logical equivalence $\pi_{j_1} \wedge \cdots \wedge \pi_{j_\ell} \Leftrightarrow \pi$ holds, then
  \begin{align}
    \frac{\alpha_{j_1}\cdots \alpha_{j_\ell}\partial^\ell P_{\mathbf{q}}}{\partial \alpha_{j_1} \cdots \partial \alpha_{j_\ell}} = \frac{\beta \partial P_{\mathbf{q}}}{\partial \beta} \label{eq:aux2}
  \end{align}
\end{lemma}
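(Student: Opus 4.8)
The plan is to work directly with the monomial expansion of $P_{\mathbf{q}}$ in Eq.~(\ref{eq:pq}) and track how the differential operators act on individual monomials. Recall that $P_{\mathbf{q}} = \sum_{i=1}^d \prod_{j} \alpha_j^{\inner{\mathbf{c}_j}{\mathbf{t}_i}}\,\beta^{\inner{\mathbf{q}}{\mathbf{t}_i}}$, and because every statistic and the query $\mathbf{q}$ are counting queries given by predicates, each exponent $\inner{\mathbf{c}_j}{\mathbf{t}_i}$ and $\inner{\mathbf{q}}{\mathbf{t}_i}$ is either $0$ or $1$: it equals $1$ exactly when the tuple $t_i$ satisfies the corresponding predicate. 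Thus the $i$-th monomial carries the factor $\alpha_j$ iff $t_i \models \pi_j$, and carries $\beta$ iff $t_i \models \pi$. The key observation is that for a degree-$1$ variable $\alpha_j$, the operator $\alpha_j \frac{\partial}{\partial \alpha_j}$ is exactly the projection onto those monomials containing $\alpha_j$ (it kills every monomial not divisible by $\alpha_j$ and acts as the identity on the rest). Hence $\frac{\alpha_{j_1}\cdots\alpha_{j_\ell}\,\partial^\ell P_{\mathbf{q}}}{\partial\alpha_{j_1}\cdots\partial\alpha_{j_\ell}}$ is precisely the sub-sum of $P_{\mathbf{q}}$ over those tuples $t_i$ satisfying $\pi_{j_1}\wedge\cdots\wedge\pi_{j_\ell}$, and similarly $\frac{\beta\,\partial P_{\mathbf{q}}}{\partial\beta}$ is the sub-sum over tuples satisfying $\pi$.

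For part (1), I would argue as follows. Let $S_L$ be the sub-sum of $P_{\mathbf{q}}$ over tuples satisfying $\pi_{j_1}\wedge\cdots\wedge\pi_{j_\ell}$, which is the left-hand side of Eq.~(\ref{eq:aux}). The right-hand side is $\beta\frac{\partial}{\partial\beta}$ applied to $S_L$, i.e.\ the further restriction of $S_L$ to those of its monomials that also contain $\beta$ — equivalently, the sub-sum over tuples satisfying $\pi_{j_1}\wedge\cdots\wedge\pi_{j_\ell}\wedge\pi$. Under the hypothesis $\pi_{j_1}\wedge\cdots\wedge\pi_{j_\ell}\Rightarrow\pi$, every tuple contributing to $S_L$ already satisfies $\pi$, so every monomial of $S_L$ already contains a factor $\beta$; consequently $\beta\frac{\partial}{\partial\beta}$ acts as the identity on $S_L$, and the two sides are equal. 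For part (2), note first that the equivalence $\pi_{j_1}\wedge\cdots\wedge\pi_{j_\ell}\Leftrightarrow\pi$ gives, via the projection reading above, that the left-hand side of Eq.~(\ref{eq:aux2}) — the sub-sum over tuples satisfying $\pi_{j_1}\wedge\cdots\wedge\pi_{j_\ell}$ — equals the sub-sum over tuples satisfying $\pi$, which is exactly $\frac{\beta\,\partial P_{\mathbf{q}}}{\partial\beta}$; here I additionally use that $\mathbf{q}$ itself is not among the statistics, so $\beta$ is a fresh degree-$1$ variable and the same projection identity applies to it. One subtlety to handle carefully is whether any $\alpha_{j_r}$ could coincide with $\beta$ or whether two of the $\alpha_{j_r}$ could coincide; since $\beta$ is fresh and mixed partials in distinct degree-$1$ variables commute and compose as successive projections, the argument goes through, but I would state the genericity assumptions explicitly.

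The main obstacle — really the only delicate point — is making the "differentiation = projection onto a sub-sum" correspondence fully rigorous when several operators are composed: one must check that applying $\alpha_{j_1}\frac{\partial}{\partial\alpha_{j_1}},\ldots,\alpha_{j_\ell}\frac{\partial}{\partial\alpha_{j_\ell}}$ in succession indeed yields the sub-sum over the conjunction $\pi_{j_1}\wedge\cdots\wedge\pi_{j_\ell}$, and that this is unaffected by the order of application and by the presence or absence of $\beta$. This is a short induction on $\ell$: the base case $\ell=1$ is the single-variable projection identity, and the inductive step uses that each monomial in the partially-differentiated polynomial is still multilinear of degree $\le 1$ in the remaining $\alpha$'s, so the next operator again acts as a clean projection. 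Once this bookkeeping lemma is in place, both (1) and (2) are immediate from the two logical hypotheses.
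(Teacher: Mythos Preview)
Your proposal is correct and rests on the same core observation as the paper: because $P_{\mathbf{q}}$ is multilinear, the operator $\alpha_j\,\partial/\partial\alpha_j$ acts as the projection onto the monomials containing $\alpha_j$, i.e.\ onto the sub-sum indexed by tuples satisfying $\pi_j$. Your argument for part~(1) is essentially identical to the paper's.

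For part~(2) there is a small but genuine difference in packaging. The paper does not re-invoke the projection interpretation directly; instead it first applies part~(1) to get Eq.~(\ref{eq:aux}), and then, using the reverse implications $\pi \Rightarrow \pi_{j_r}$, applies part~(1) repeatedly with the roles of $\beta$ and each $\alpha_{j_r}$ swapped, peeling off the operators $\alpha_{j_r}\,\partial/\partial\alpha_{j_r}$ one at a time from the RHS of Eq.~(\ref{eq:aux}) until only $\beta\,\partial P_{\mathbf{q}}/\partial\beta$ remains. Your route is more direct: you identify both sides of Eq.~(\ref{eq:aux2}) as the sub-sum of $P_{\mathbf{q}}$ over tuples satisfying the (logically equivalent) predicates, and conclude equality immediately. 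Your version is slightly cleaner and avoids the iterative unwinding; the paper's version has the minor advantage of reusing part~(1) as a black box rather than re-appealing to the monomial description. Both are equally valid, and your care about distinctness of the $\alpha_{j_r}$ and freshness of $\beta$ is appropriate (the paper leaves this implicit).
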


\begin{proof}
  (1) The proof is immediate by noting that every monomial of $P_{\mathbf{q}}$ that contains all variables $\alpha_{j_1}, \ldots, \alpha_{j_\ell}$ also contains $\beta$; therefore, all monomials on the LHS of Eq.~(\ref{eq:aux}) contain $\beta$ and thus remain unaffected by applying the operator $\beta \partial / \partial \beta$.

  (2) From item (1), we derive Eq.~(\ref{eq:aux}); we prove now that the RHS of Eq.~(\ref{eq:aux}) equals $\frac{\beta \partial P_{\mathbf{q}}}{\partial \beta}$.  We apply item (1) again to the implication $\pi \Rightarrow \pi_{j_1}$ and obtain $\frac{\beta \partial P_{\mathbf{q}}}{\partial \beta} = \frac{\beta \alpha_{j_1} \partial^2 P_{\mathbf{q}}}{\partial \beta \partial \alpha_{j_1}}$ (the role of $\beta$ in Eq.~(\ref{eq:aux}) is now played by $\alpha_{j_1}$). As $P$ is linear, the order of partials does not matter, and this allows us to remove the operator $\alpha_{j_1}\partial/\partial \alpha_{j_1}$ from the RHS of Eq.~(\ref{eq:aux}).  By repeating the argument for $\pi \Rightarrow \pi_{j_2}$, $\pi \Rightarrow \pi_{j_3}$, etc, we remove $\alpha_{j_2}\partial/\partial \alpha_{j_2}$, then $\alpha_{j_3}\partial/\partial \alpha_{j_3}$, etc from the RHS.
\end{proof}

\begin{corollary} (1) Assume $\mathbf{q}$ is defined by a point predicate $\pi = (A_1=v_1 \wedge \cdots \wedge A_\ell=v_\ell)$ for some $\ell \leq m$.  For each $i=1,\ell$, denote $j_i$ the index of the statistic associated to the value $v_i$. In other words, the predicate $\pi_{j_i} \equiv (A_i = v_i)$.  Then,
  \begin{align}
    \label{eq:pqopt}
    \E[\inner{\mathbf{q}}{\mathbf{I}}] = & \frac{n}{P}\frac{\alpha_{j_1}\cdots \alpha_{j_\ell} \partial^\ell P}{\partial \alpha_{j_1} \cdots \partial \alpha_{j_\ell}}
  \end{align}
  (2) Let $\mathbf{q}$ be the query defined by a predicate as in Eq.~(\ref{eq:piq}).  Then,
  \begin{align}
    \label{eq:pqopt2}
    \E[\inner{\mathbf{q}}{\mathbf{I}}] = & \sum_{j_1 \in J_1: \pi_{j_1} \Rightarrow \rho_1} \cdots \sum_{j_m \in J_m: \pi_{j_m} \Rightarrow \rho_m} \frac{n}{P}\frac{\alpha_{j_1}\cdots \alpha_{j_m} \partial^m P}{\partial \alpha_{j_1} \cdots \partial \alpha_{j_m}}
  \end{align}
\end{corollary}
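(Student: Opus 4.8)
The plan is to obtain both parts of the corollary from Lemma~\ref{lemma:query_derivative}, the expectation formula Eq.~(\ref{eq:eq}), and the completeness of the $1$-dimensional statistics; almost everything is bookkeeping once the lemma is available.

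For part (1), I would argue as follows. Since $\pi = (A_1=v_1\wedge\cdots\wedge A_\ell=v_\ell)$ and each $\pi_{j_i}$ is exactly $(A_i=v_i)$, the conjunction $\pi_{j_1}\wedge\cdots\wedge\pi_{j_\ell}$ is logically equivalent to $\pi$, so Lemma~\ref{lemma:query_derivative}(2) applies and gives the polynomial identity $\frac{\alpha_{j_1}\cdots\alpha_{j_\ell}\,\partial^\ell P_{\mathbf{q}}}{\partial\alpha_{j_1}\cdots\partial\alpha_{j_\ell}} = \frac{\beta\,\partial P_{\mathbf{q}}}{\partial\beta}$ in all variables of $P_{\mathbf{q}}$, $\beta$ included. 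I then substitute $\beta=1$: the right-hand side becomes $\frac{\partial P_{\mathbf{q}}}{\partial\beta}\big|_{\beta=1}$, which by Eq.~(\ref{eq:eq}) equals $\frac{P}{n}\,\E[\inner{\mathbf{q}}{\mathbf{I}}]$; the left-hand side becomes $\frac{\alpha_{j_1}\cdots\alpha_{j_\ell}\,\partial^\ell P}{\partial\alpha_{j_1}\cdots\partial\alpha_{j_\ell}}$, because differentiation in the $\alpha$'s commutes with substituting the independent variable $\beta=1$ and $P_{\mathbf{q}}[\beta=1]\equiv P$. Rearranging gives Eq.~(\ref{eq:pqopt}).

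For part (2), I would first reduce the conjunctive counting query to a disjoint sum of point counting queries. Because the $1$-dimensional statistics on each attribute $A_i$ are complete --- there is exactly one $j\in J_i$ for each value (bucket) of $D_i$, and the predicates $\{\pi_j : j\in J_i\}$ partition $D_i$ --- every range predicate $\rho_i$ can be written as the disjoint disjunction $\rho_i \equiv \bigvee_{j\in J_i:\ \pi_j\Rightarrow\rho_i}\pi_j$. Distributing the conjunction $\pi=\rho_1\wedge\cdots\wedge\rho_m$ over these disjunctions yields $\pi\equiv\bigvee_{(j_1,\ldots,j_m)}(\pi_{j_1}\wedge\cdots\wedge\pi_{j_m})$ with $(j_1,\ldots,j_m)$ ranging over exactly the index tuples in Eq.~(\ref{eq:pqopt2}), and every tuple of $Tup$ satisfies at most one of these point predicates. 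Hence, as linear (counting) queries, $\mathbf{q}$ is the sum of the counting queries of the $\pi_{j_1}\wedge\cdots\wedge\pi_{j_m}$, so by linearity of expectation $\E[\inner{\mathbf{q}}{\mathbf{I}}]$ is the corresponding sum of expectations; applying part (1) with $\ell=m$ to each summand gives Eq.~(\ref{eq:pqopt2}).

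The step that needs the most care is the decomposition in part (2): one must verify that completeness and pairwise disjointness of the $1$-dimensional statistics really make $\rho_i$ an exact, non-overlapping union of the selected point predicates, so that the counting query splits exactly (each qualifying tuple counted once) and the termwise application of linearity of expectation and part (1) is justified. In part (1) the only subtlety is that Lemma~\ref{lemma:query_derivative}(2) must be used as a genuine polynomial identity so that the substitution $\beta=1$ is legitimate, which is immediate from its proof.
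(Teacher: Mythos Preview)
Your proposal is correct and follows essentially the same approach as the paper: part (1) combines Eq.~(\ref{eq:eq}), Lemma~\ref{lemma:query_derivative}(2) (i.e., Eq.~(\ref{eq:aux2})), and the identity $P_{\mathbf{q}}[\beta=1]\equiv P$, and part (2) expands $\mathbf{q}$ as a disjoint sum of point queries via completeness of the 1D statistics and then applies part (1) with linearity of expectation. You have simply spelled out the bookkeeping (the $\beta=1$ substitution commuting with the $\alpha$-derivatives, and the disjointness of the point-query decomposition) that the paper leaves implicit.
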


\begin{proof}
(1) Eq.~(\ref{eq:pqopt}) follows from Eq.~(\ref{eq:eq}), Eq.~(\ref{eq:aux2}), and the fact that $P_{\mathbf{q}}[\beta=1] \equiv P$.
(2) Follows from (1) by expanding $\mathbf{q}$ as a sum of point queries as in Lemma.~\ref{lemma:query_derivative} (1).
\end{proof}

In order to compute a query using Eq.~(\ref{eq:pqopt2}), we would have to examine all $m$-dimensional points that satisfy the query's predicate, convert each point into the corresponding 1D statistics, and use Eq.~(\ref{eq:pqopt}) to estimate the count of the number of tuples at this point. Clearly, this is inefficient when $\mathbf{q}$ contains any range predicate containing many point queries.

{\bf New formula for derivatives} Thus, to compute $\E[\inner{\mathbf{q}}{\mathbf{I}}]$, one has to evaluate several partial derivatives of $P$. Recall that $P$ is stored in a highly compressed format, and therefore, computing the derivative may involve nontrivial manipulations. Instead, we use the fact that our polynomial is overcomplete, meaning that $P = \sum_{j \in J_i} \alpha_j P_j$, where $P_j$, $j \in J_i$ does not depend on any variable in $\setof{\alpha_j}{j \in J_i}$ (Eq.~(\ref{eq:p1i})). Let $\rho_i$ be any predicate on the attribute $A_i$.  Then,
\begin{align}
\sum_{j_i \in J_i: \pi_{j_i} \Rightarrow \rho_i} \frac{\alpha_{j_i} \partial P}{\partial \alpha_{j_i}} = & P[\bigwedge_{j \in J_i: \pi_{j_i} \not\Rightarrow \rho_i} \alpha_j = 0]
\end{align}
Thus, in order to compute the summation on the left, it suffices to compute $P$ after setting to $0$ the values of all variables $\alpha_j$, $j \in J_i$ that do not satisfy the predicate $\rho_i$ (this is what the condition $\pi_{j_i} \not\Rightarrow \rho_i$ checks).

Finally, we combine this with Eq.~(\ref{eq:pqopt2}) and obtain the following, much simplified formula for answering a query $\mathbf{q}$, defined by a predicate of the form Eq.~(\ref{eq:piq}):
\begin{align*}
  \E[\inner{\mathbf{q}}{\mathbf{I}}] = \frac{n}{P} P[\bigwedge_{i=1,m} \bigwedge_{j \in J_i: \pi_{j_i} \not\Rightarrow \rho_i} \alpha_j = 0]
\end{align*}
In other words, we set to 0 all 1D variables $\alpha_j$ that correspond to values that do {\em not} satisfy the query, evaluate the polynomial $P$, and multiply it by $\frac{n}{P}$ (which is a precomputed constant independent of the query). For example, if the query ignores an attribute $A_i$, then we leave the 1D variables for that attribute, $\alpha_j$, $j \in J_i$, unchanged. If the query checks a range predicate, $A_i \in [u, v]$, then we set $\alpha_j=0$ for all 1D variables $\alpha_j$ corresponding to values of $A_i$ outside that range.

\begin{example}
Consider three attributes $A$, $B$, and $C$ each with domain 1000 and two multi-dimensional statistics: one $AB$ statistic $A \in [101, 200] \wedge B \in [501, 600]$ and two $BC$ statistics $B \in [551, 650] \wedge C \in [801, 900]$ and $B \in [651, 700] \wedge C \in [701, 800]$.  The polynomial $P$ is shown in Eq.~(\ref{eq:ex:p}).  Consider the query:

\begin{small}
\begin{verbatim}
 q:  SELECT COUNT(*) FROM R 
     WHERE A in [36,150] AND C in [660,834]
\end{verbatim}
\end{small}
We estimate $\mathbf{q}$ using our formula $\frac{n}{P} P[\alpha_{1:35}=0,\ \alpha_{151:1000}=0,\ \gamma_{1:659}=0,\ \gamma_{835:1000}=0]$. There is no need to compute a representation of a new polynomial.
\end{example}

\subsection{Choosing Statistics}
\label{subsec:stat:selection}

In this section, we discuss how we choose the multi-dimensional statistics.  Recall that our summary always includes all 1D statistics of the form $A_i = v$ for all attributes $A_i$ and all values $v$ in the active domain $D_i$. We describe here how to tradeoff the size of the summary for the precision of the MaxEnt model.

A first choice that we make is to include only 2D statistics. It has been shown that restricting to pairwise correlations offers a reasonable compromise between the number of statistics needed and the summary's accuracy~\cite{tzoumas2013efficiently}. Furthermore, we restrict each $A_{i_1}A_{i_2}$ statistic to be a range predicate; i.e., $\pi_j \equiv A_{i_1} \in [u_1,v_1] \wedge A_{i_2} \in [u_2, v_2]$. As explained in Sec.~\ref{subsec:compress}, the predicates over the same attributes $A_{i_1}A_{i_2}$ are disjoint.

The problem is as follows: given a budget $B = B_a*B_s$, which $B_a$ attribute pairs $A_{i_1}A_{i_2}$ do we collect statistics on and which $B_s$ statistics do we collect for each attribute pair? This is a complex problem, and we make the simplifying assumption that $B$ and $B_a$ are known, but we explore different choices of $B_a$ in Sec.~\ref{sec:evaluation}. It is future work to investigate automatic techniques for determining the budgets.

Given $B_a$, there are two main considerations when picking pairs: attribute cover and attribute correlation. If we focus only on correlation, we can pick the set of attribute pairs that are not uniform\footnote{This can be checked by calculating the chi-squared coefficient and seeing if it is close to 0} and have the highest combined correlation such that every pair has at least one attribute not included in any previously chosen, more correlated pair. If we also consider attribute cover, we can pick the set of pairs that covers the most attributes with the highest combined correlation. For example, if $B_a = 2$ and we have the attribute pairs $BC$, $AB$, $CD$, and $AD$ in order of most to least correlated, if we only consider correlation, we would choose $AB$ and $BC$. However, if we consider attribute cover, we would choose $AB$ and $CD$. We experiment with both of these choices in Sec.~\ref{sec:evaluation}, and, in the end, conclude that considering attribute cover achieves more precise query results for the same budget than the alternative.

Next, for a given attribute pair of type $A_{i_1}A_{i_2}$, we need to choose the best $B_s$ 2D range predicates $[u_1,v_1]\times[u_2,v_2]$. We consider three heuristics and show experimental results to determine which technique yields, on average, the lowest error on query results.

{\bf LARGE SINGLE CELL} In this heuristic, the range predicates are single point predicates, $A_{i_1}=u_1 \wedge A_{i_2}=u_2$, and we choose the points $(u_1,u_2)$ as the $B_s$ most popular values in the two dimensional space; i.e., the $B_s$ largest values of $|\sigma_{A_{i_1}=u_1 \wedge A_{i_2}=u_2}(I)|$.

{\bf ZERO SINGLE CELL} In this heuristic, we select the empty/zero/nonexistent cells; i.e., we choose $B_s$ points $(u_1,u_2)$ s.t. $\sigma_{A_{i_1}=u_1 \wedge A_{i_2}=u_2}(I)=\emptyset$. If there are fewer than $B_s$ such points, we choose the remaining points as in SINGEL CELL. The justification for this heuristic is that, given only the 1D statistics, the MaxEnt model will produce false positives (``phantom'' tuples) in empty cells; this is the opposite problem encountered by sampling techniques, which return false negatives. This heuristic has another advantage because the value of $\alpha_j$ in $P$ is always 0 and does not need to be updated during solving.

{\bf COMPOSITE} This method partitions the entire space $D_{i_1} \times D_{i_2}$ into a set of $B_s$ disjoint rectangles and associates one statistic with each rectangle. We do this using an adaptation of KD-trees.

The only difference between our KD-tree algorithm and the traditional one is our splitting condition. Instead of splitting on the median, we split on the value that has the lowest sum squared average value difference. This is because we want our KD-tree to best represent the true values. Suppose we have cell counts on dimensions $A$ and $A'$ as shown in Fig.~\ref{fig:kd_methodcompare} (a). For the next vertical split, if we followed the standard KD-tree algorithm, we would choose the second split. Instead, our method chooses the first split. Using the first split minimizes the sum squared error.

\begin{figure}[!t]
  \begin{minipage}[c]{.18\textwidth}
  \centering
  \subfloat[]{\includegraphics[width=\textwidth]{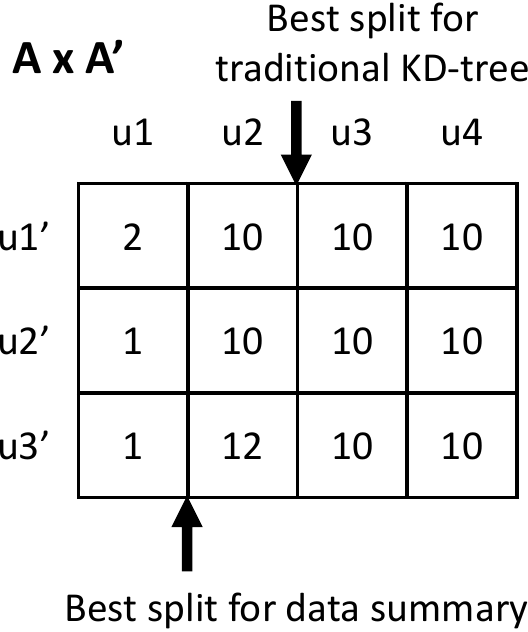}}
  \end{minipage}%
  \begin{minipage}[c]{.32\textwidth}
  \centering
  \subfloat[]{\includegraphics[width=\textwidth]{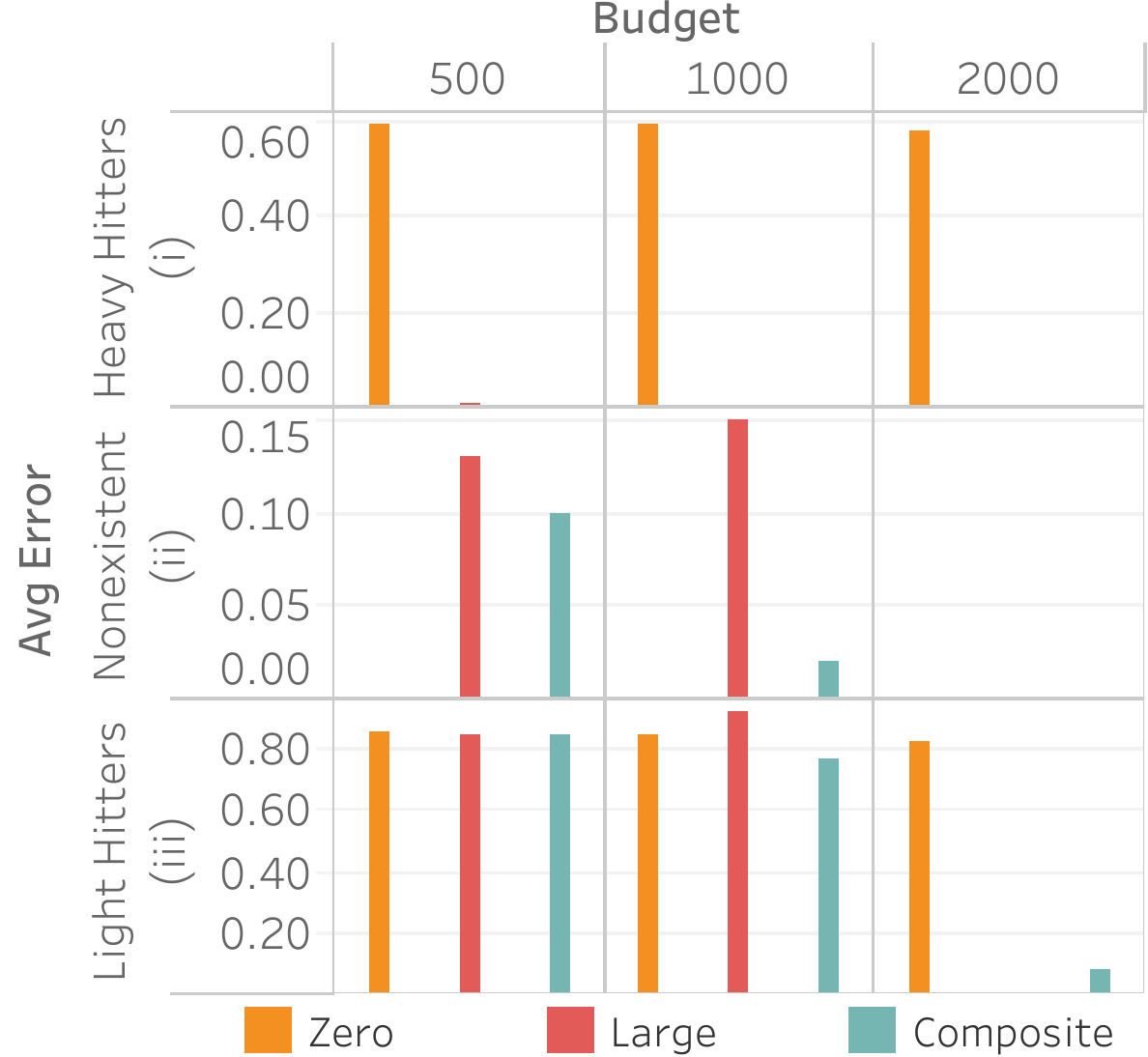}}
  \end{minipage}
  \caption{(a) Illustration of the splitting step, and (b) query accuracy versus budget for the three different heuristics and three different selections: (b.i) selecting 100 heavy hitter values, (b.ii) selecting nonexistent values, and (b.iii) selecting 100 light hitter values.}
  \label{fig:kd_methodcompare}
\end{figure}

Our COMPOSITE method repeatedly splits the attribute domains $D_{i_1}$ and $D_{i_2}$ (alternating) by choosing the best split value until it exhausts the budget $B_s$. Then, for each rectangle $[u_1,v_1] \times [u_2,v_2]$ in the resulting KD-tree, it creates a 2D statistic $(\mathbf{c}_j, s_j)$, where the query $\mathbf{c}_j$ is associated with the 2D range predicate and the numerical value $s_j \eqdef |\sigma_{A_{i_1} \in [u_1,v_1] \wedge A_{i_2} \in [u_2,v_2]}(I)|$.

We evaluate the three heuristics on the dataset of flights in the United States restricted to the attributes (fl\texttt{\_}date, fl\texttt{\_}time, distance) (see Sec \ref{sec:evaluation} for details). We gather statistics using the different techniques and different budgets on the attribute pair (fl\texttt{\_}time, distance). There are 5,022 possible 2D statistics, 1,334 of which exist in \texttt{Flights}. We evaluate the accuracy of the resultant count of the query \texttt{SELECT fl\_time, distance, COUNT(*) FROM Flights WHERE fl\_time = x AND distance = y GROUP BY fl\_time, distance} for 100 heavy hitter (x, y) values, 100 light hitter (x, y) values, and 200 random (x, y) nonexistent/zero values. We choose 200 zero values to match the 100+100 heavy and light hitters.

Figure \ref{fig:kd_methodcompare} (b.i) plots the query accuracy versus method and budget for 100 heavy hitter values. Both {\bf LARGE} and {\bf COMPOSITE} achieve almost zero error for the larger budgets while {\bf ZERO} gets around 60 percent error no matter the budget.

(b.ii) plots the same for nonexistent values, and clearly {\bf ZERO} does best because it captures the zero values first. {\bf COMPOSITE}, however, gets a low error with a budget of 1,000 and outperforms {\bf LARGE}. Interestingly, {\bf LARGE} does slightly worse with a budget of 1,000 than 500. This is a result of the final value of $P$ being larger with a larger budget, and this makes our estimates slightly higher than 0.5, which we round up to 1. With a budget of 500, our estimates are slightly lower than 0.5, which we round down to 0.

Lastly, (b.iii) plots the same for 100 light hitter values, and while {\bf LARGE} eventually outperforms {\bf COMPOSITE}, {\bf COMPOSITE} gets similar error for all budgets. In fact, {\bf COMPOSITE} outperforms {\bf LARGE} for a budget of 1,000 because {\bf LARGE} predicts that more of the light hitter values are nonexistent than it does with a smaller budget as less weight is distributed to the light hitter values.

Overall, we see that {\bf COMPOSITE} is the best method to use across all queries and is the technique we use in our evaluation. Note that when the budget is high enough, around 5,000, all methods get almost no error on all queries because we have a large enough budget to capture the entire domain. Also note that the uncompressed polynomial size of these summarizes are orders of magnitude larger than the compressed versions. For example, for a budget of 2,000, the uncompressed polynomial has 4.4 million terms while the compressed polynomial has only 9,000 terms.

\section{Implementation}
\label{sec:implementation}
We implemented our polynomial solver and query evaluator in Java 1.8, in a prototype system that we call \name. We created our own polynomial class and variable types to implement our factorization. Since the initial implementation of our solver took an estimated 3 months to run on the experiments in Sec \ref{sec:evaluation}, we further optimized our factorization and solver by using bitmaps to associate variables with their statistics and using Java's parallel streaming library to parallelize polynomial evaluation. Also, by using Java to answer queries, we were able to store the polynomial factorization in memory. By utilizing these techniques, we reduced the time to learn the model (solver runtime) to 1 day and saw a decrease in query answering runtime from around 10 sec to 500 ms (95\% decrease). We show more query evaluation performance results in Sec.~\ref{sec:evaluation}.

Lastly, we stored the polynomial variables in a Postgres 9.5.5 database and stored the polynomial factorization in a text file. We perform all experiments on a 64bit Linux machine running Ubuntu 5.4.0. The machine has 120 CPUs and 1 TB of memory. For the timing results, the Postgres database, which stores all the samples, also resides on this machine and has a shared buffer size of 250 GB.

\section{Evaluation}
\label{sec:evaluation}
In this section, we evaluate the performance of \name in terms of query accuracy and query execution time. We compare our approach to uniform sampling and stratified sampling.

\subsection{Experimental Setup}
For all our summaries, we ran our solver for 30 iterations using the method presented in Sec.~\ref{subsec:solving} or until the error was below $1 \times 10^{-6}$. Our summaries took under 1 day to compute with the majority of the time spent building the polynomial and solving for the parameters.

We evaluate \name on two real datasets as opposed to benchmark data to measure query accuracy in the presence of naturally occurring attribute correlations. The first dataset comprises information on flights in the United States from January 1990 to July 2015~\cite{rita}. We load the data into PostgreSQL, remove null values, and bin all real-valued attributes into equi-width buckets. We further reduce the size of the active domain to decrease memory usage and solver runtime by binning cities such that the two most popular cities in each state are separated and the remaining less popular cities are grouped into a city called `Other'. We use equi-width buckets to facilitate transforming a user's query into our domain and to avoid hiding outliers, but it is future work to try different bucketization strategies. The resulting relation, \texttt{FlightsFine(fl\_date, origin\_city, dest\_city, fl\_time, distance)}, is 5 GB in size.

To vary the size of our active domain, we also create \texttt{FlightsCoarse(fl\_date, origin\_state, dest\_state, fl\_time, distance)}, where we use the origin state and destination state as flight locations. The left table in Fig.~\ref{fig:attr_size} shows the resulting active domain sizes. 

The second dataset is 210 GB in size. It comprises N-body particle simulation data~\cite{ChaNGaScaling}, which captures the state of astronomy simulation particles at different moments in time (snapshots). The relation \texttt{Particles(density, mass, x, y, z, grp, type, snapshot)} contains attributes that capture particle properties and a binary attribute, grp, indicating if a particle is in a cluster or not. We bucketize the continuous attributes (density, mass, and position coordinates) into equi-width bins. The right table in Fig.~\ref{fig:attr_size} shows the resulting domain sizes.

\begin{figure}
    \scriptsize
    \centering
    \begin{tabular}{|M{39pt}|M{33pt}|M{33pt}|}
    \hline
    & \texttt{Flights} \texttt{Coarse} & \texttt{Flights} \texttt{Fine} \\ \hline
    \texttt{fl\_date (FD)} & 307 & 307 \\ \hline
    \texttt{origin (OS/OC)} & 54 & 147 \\ \hline
    \texttt{dest (DS/DC)} & 54 & 147 \\ \hline
    \texttt{fl\_time (ET)} & 62 & 62 \\ \hline
    \texttt{distance (DT)} & 81 & 81 \\ \hline
    \# possible tuples & $4.5 \times 10^9$ & $3.3 \times 10^{10}$ \\ \hline
   \end{tabular}
    \begin{tabular}{|c|c|}
    \hline
    & \texttt{Particles} \\ \hline
    \texttt{density} & 58 \\ \hline
    \texttt{mass} & 52 \\ \hline
    \texttt{x} & 21 \\ \hline
    \texttt{y} & 21 \\ \hline
    \texttt{z} & 21 \\ \hline
    \texttt{grp} & 2 \\ \hline
    \texttt{type} & 3 \\ \hline
    \texttt{snapshot} & 3 \\ \hline
    \# possible & \\
    tuples & $5.0 \times 10^8$ \\ \hline
    \end{tabular}
    \caption{Active domain sizes. Each cell shows the number of distinct values after binning. Abbreviations shown in brackets are used in figures to refer to attribute names: e.g., OS stands for origin\texttt{\_}state. }
    \label{fig:attr_size}
\end{figure}

\subsection{Query Accuracy}
\label{subsec:results:accuracy}

We first compare \name to uniform and stratified sampling on the flights dataset. We use one percent samples, which require approximately 100 MB of space when stored in PostgreSQL. To approximately match the sample size, our largest summary requires only 600 KB of space in PostgreSQL to store the polynomial variables and approximately 200 MB of space in a text file to store the polynomial factorization. This, however, could be improved and factorized further beyond what we did in our prototype implementation.

We compute correlations on \texttt{FlightsCoarse} across all attribute-pairs and identify the following pairs as having the largest correlations (C stands for ``coarse''): 1C = (origin\texttt{\_}state, distance), 2C = (destination\texttt{\_}state, distance), 3 = (fl\texttt{\_}time, distance), and 4C = (origin\texttt{\_}state, destination\texttt{\_}state). We use the corresponding attributes, which are also the most correlated, for the finer-grained relation and refer to those attribute-pairs as 1F, 2F, and 4F.

As explained in Sec.~\ref{subsec:stat:selection}, we build four summaries with a budget $B = 3,000$, chosen to keep runtime under a day while allowing for variations of $B_a$ (``breadth'') and $B_s$ (``depth''), to show the difference in choosing statistics based solely on correlation (choosing statistics in order of most to least correlated) versus attribute cover (choosing statistics that cover the attributes with the highest combined correlation). The first summary, No2D, contains only 1D statistics. The next two, Ent1\&2 and Ent3\&4, use 1,500 buckets across the attribute-pairs (1, 2) and (3, 4), respectively. The final one, Ent1\&2\&3, uses 1,000 buckets for the three attribute-pairs (1, 2, 3). We do not include 2D statistics related to the flight date attribute because this attribute is relatively uniformly distributed and does not need a 2D statistic to correct for the MaxEnt's underlying uniformity assumption. Fig~\ref{fig:method_summary} summarizes the summaries.



For sampling, we choose to compare with a uniform sample and four different stratified samples. We choose the stratified samples to be along the same attribute-pairs as the 2D statistics in our summaries; i.e., pair 1 through pair 4.

\begin{figure}[t]
    \scriptsize
    \centering
    \begin{tabular}{|c|c|c|c|c|c|}
    \hline
    & MaxEnt Method & No2D & 1\&2 & 3\&4 & 1\&2\&3 \\ \hline
    Pair 1 & (origin, distance) & & X & & X \\ \hline
    Pair 2 & (dest, distance) & & X & & X \\ \hline
    Pair 3 & (time, distance) & & & X & X \\ \hline
    Pair 4 & (origin, dest) & & & X & \\ \hline
    \end{tabular}
    \caption{MaxEnt 2D statistics including in the summaries. The top row is the label of the MaxEnt method used in the graphs.}
    \label{fig:method_summary}
\end{figure}

To test query accuracy, we use the following query template:

\begin{scriptsize}
\begin{lstlisting}
    SELECT A1,..., Am COUNT(*)
    FROM R WHERE A1=`v1' AND ... AND Am=`vm';
\end{lstlisting}
\end{scriptsize} 

We test the approaches on 400 unique \texttt{(A1,.., Am)} values. We choose the attributes for the queries in a way that illustrates the strengths and weaknesses of \name. For the selected attributes, 100 of the values used in the experiments have the largest count (heavy hitters), 100 have the smallest count (light hitters), and 200 (to match the 200 existing values) have a zero true count (nonexistent/null values). To evaluate the accuracy of \name, we compute $|true-est|/(true+est)$ on the heavy and light hitters. To evaluate how well \name distinguishes between rare and nonexistent values, we compute the F measure, $2*\textrm{precision}*\textrm{recall}/(\textrm{precision}+\textrm{recall})$ with precision $= |\{est_t > 0\ :\ t \in \textrm{light hitters}\}|/|\{est_t > 0\ :\ t \in (\textrm{light hitters}\ \cup\ \textrm{null values})\}|$ and recall $= |\{est_t > 0\ :\ t \in \textrm{light hitters}\}|/100$. We do not compare the runtime of \name to sampling for the flights data because the dataset is small, and the runtime of \name is, on average, below 0.5 seconds and at most 1 sec. Sec.~\ref{subsec:results:scalability} reports runtime for the larger data.

\begin{figure}[t]
    \centering
    \includegraphics[width=0.48\textwidth]{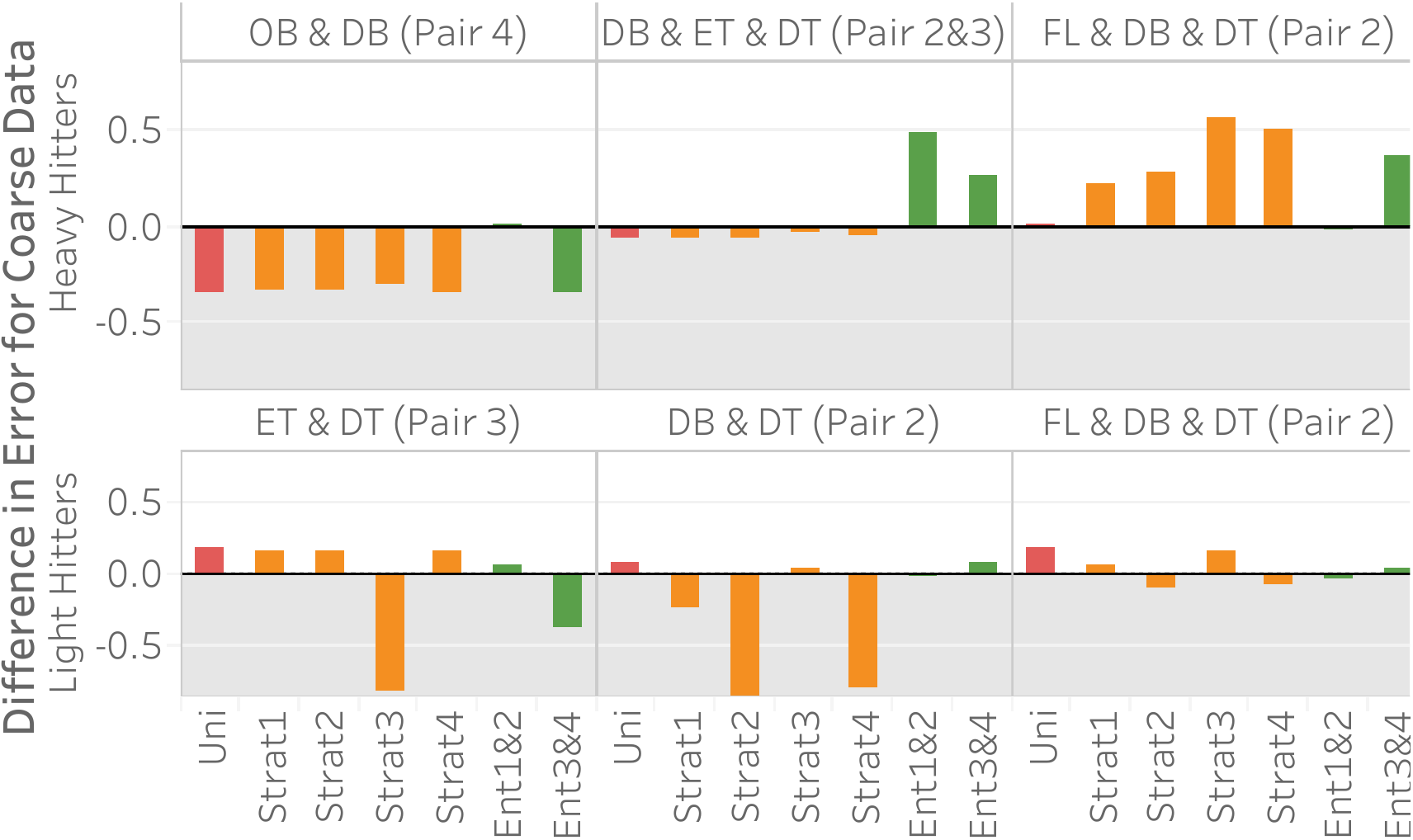}
    \caption{Query error difference between all methods and Ent1\&2\&3 over \texttt{FlightsCoarse}. The pair in parenthesis in the column header corresponds to the 2D statistic pair(s) used in the query template. For reference, pair 1 is (origin/OB, distance/DT), pair 2 is (dest/DB, distance/DT), pair 3 is (time/ET, distance/DT), and pair 4 is (origin/OB, dest/DB).}
    \label{fig:coarsequeries}
\end{figure}

Fig.~\ref{fig:coarsequeries} (top) shows query error differences between all methods and Ent1\&2\&3 (i.e., average error for method X minus average error for Ent1\&2\&3) for three different heavy hitter queries over \texttt{FlightsCoarse}. Hence, bars above zero indicate that Ent1\&2\&3 performs better and vice versa. Each of the three query templates uses a different set of attributes that we manually select to illustrate different scenarios. The attributes of the query are shown in the column header in the figure, and any 2D statistic attribute-pair contained in the query attributes is in parentheses. Each bar shows the average of 100 query instances selecting different values for each template. 

As the figure shows, Ent1\&2\&3 is comparable or better than sampling on two of the three queries and does worse than sampling on query 1. The reason it does worse on query 1 is that it does not have any 2D statistics over 4C, the attribute-pair used in the query, and 4C is fairly correlated. Our lack of a 2D statistic over 4C means we cannot correct for the MaxEnt's uniformity assumption. On the other hand, all samples are able to capture the correlation because the 100 heavy hitters for query 1 are responsible for approximately 25\% of the data. This is further shown by Ent3\&4, which has 4C as one of its 2D statistics, doing better than Ent1\&2\&3 on query 1.

Ent1\&2\&3 is comparable to sampling on query 2 because two of its 2D statistics cover the three attributes in the query. It is better than both Ent1\&2 and Ent3\&4 because each of those methods has only one 2D statistic over the attributes in the query. Finally, Ent1\&2\&3 is better than stratified sampling on query 3 because it not only contains a 2D statistic over 2C but also correctly captures the uniformity of flight date. This uniformity is also why Ent1\&2 and a uniform sample do well on query 3. Another reason stratified sampling performs poorly on query 3 is because the result is highly skewed in the attributes of destination state and distance but remains uniform in flight date. The top 100 heavy hitter tuples all have the destination of `CA' with a distance of 300. This means even a stratified sample over destination state and distance will likely not be able to capture the uniformity of flight date within the strata for `CA' and 300 miles.

Fig.~\ref{fig:coarsequeries} (bottom) shows results for the same queries but for the bottom 100 light hitter values. In this case, \name always does better than uniform sampling. Our performance compared to stratified sampling depends on the stratification and query. Stratified sampling outperforms Ent1\&2\&3 when the stratification is exactly along the attributes involved in the query. For example, for query 1, the sample stratified on pair 3 outperforms \name by a significant amount because pair 3 is computed along the attributes in query 1. Interestingly, Ent3\&4 and Ent1\&2 do better than Ent1\&2\&3 on query 1 and query 2, respectively. Even though both of the query attributes for query 1 and query 2 are statistics in Ent1\&2\&3, Ent1\&2 and Ent3\&4 have more buckets and are thus able to capture more zero elements. Lastly, we see that for query 3, we are comparable to stratified sampling because we have a 2D statistic over pair 2C, and the other attribute, flight date, is relatively uniformly distributed in the query result.

We ran the same queries over the \texttt{FlightsFine} dataset and found \textit{identical} trends in error difference. We omit the graph due to space constraints.

\begin{figure}[t]
    \centering
    \includegraphics[width=0.98\linewidth]{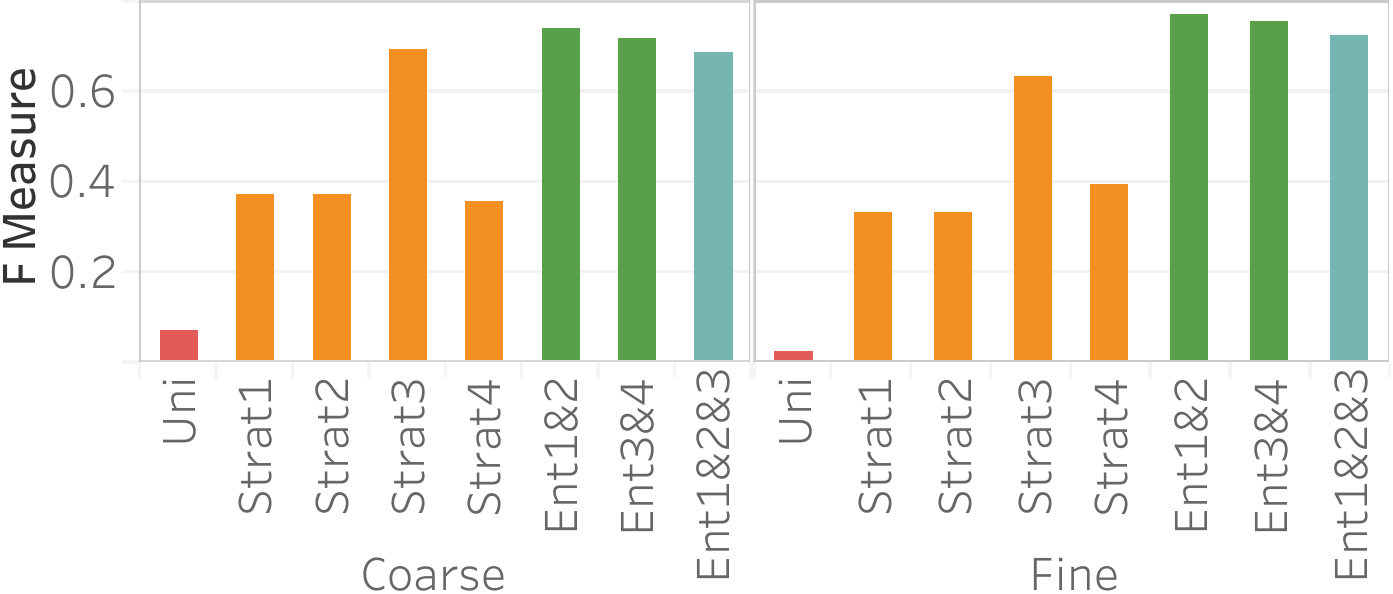}
    \caption{F measure for light hitters and null values over \texttt{FlightsCoarse} (left) and \texttt{FlightsFine} (right).}
    \label{fig:fmeasure}
\end{figure}

An important advantage of our approach is that it more accurately distinguishes between rare values and nonexistent values compared with stratified sampling, which often does not have samples for rare values when the stratification does not match the query attributes. To assess how well our approach works on those rare values, Fig.~\ref{fig:fmeasure} shows the average F measure over fifteen 2- and 3-dimensional queries selecting light hitters and null values.

We see that Ent1\&2 and 3\&4 have F measures close to 0.72, beating all stratified samples and also beating Ent1\&2\&3. The key reason why they beat Ent1\&2\&3 is that these summaries have the largest numbers of buckets, which ensures they have more fine grained information and can more easily identify regions without tuples. Ent1\&2\&3 has an F measure close to 0.69, which is slightly lower than the stratified sample over pair 3 but better than all other samples. The reason the sample stratified over pair 3 performs well is that the flight time attribute has a more skewed distribution and has more rare values than other dimensions. A stratified sample over that dimensions will be able to capture this. On the other hand, Ent1\&2\&3 will estimate a small count for any tuple containing a rare flight time value and will be rounded to 0.

\subsection{Scalability}
\label{subsec:results:scalability}

To measure the performance of \name on large-scale datasets, we use three subsets of the 210 GB \texttt{Particles} table. We select data for one, two, or all three snapshots (each snapshot is approximately 70 GB in size). We build a 1 GB uniform sample for each subset of the table as well as a stratified sample over the pair density and group with the same sampling percentage as the uniform sample. We then build two MaxEnt summaries; EntNo2D uses no 2D statistics, and EntAll contains 5 2D statistics with 100 buckets over each of the most correlated attributes, not including snapshot. We run a variety of 4D selection queries such as the ones from Sec.~\ref{subsec:results:accuracy}, split into heavy hitters and light hitters. We record the query accuracy and runtime.

\begin{figure}[t]
    \centering
    \includegraphics[width=0.98\linewidth]{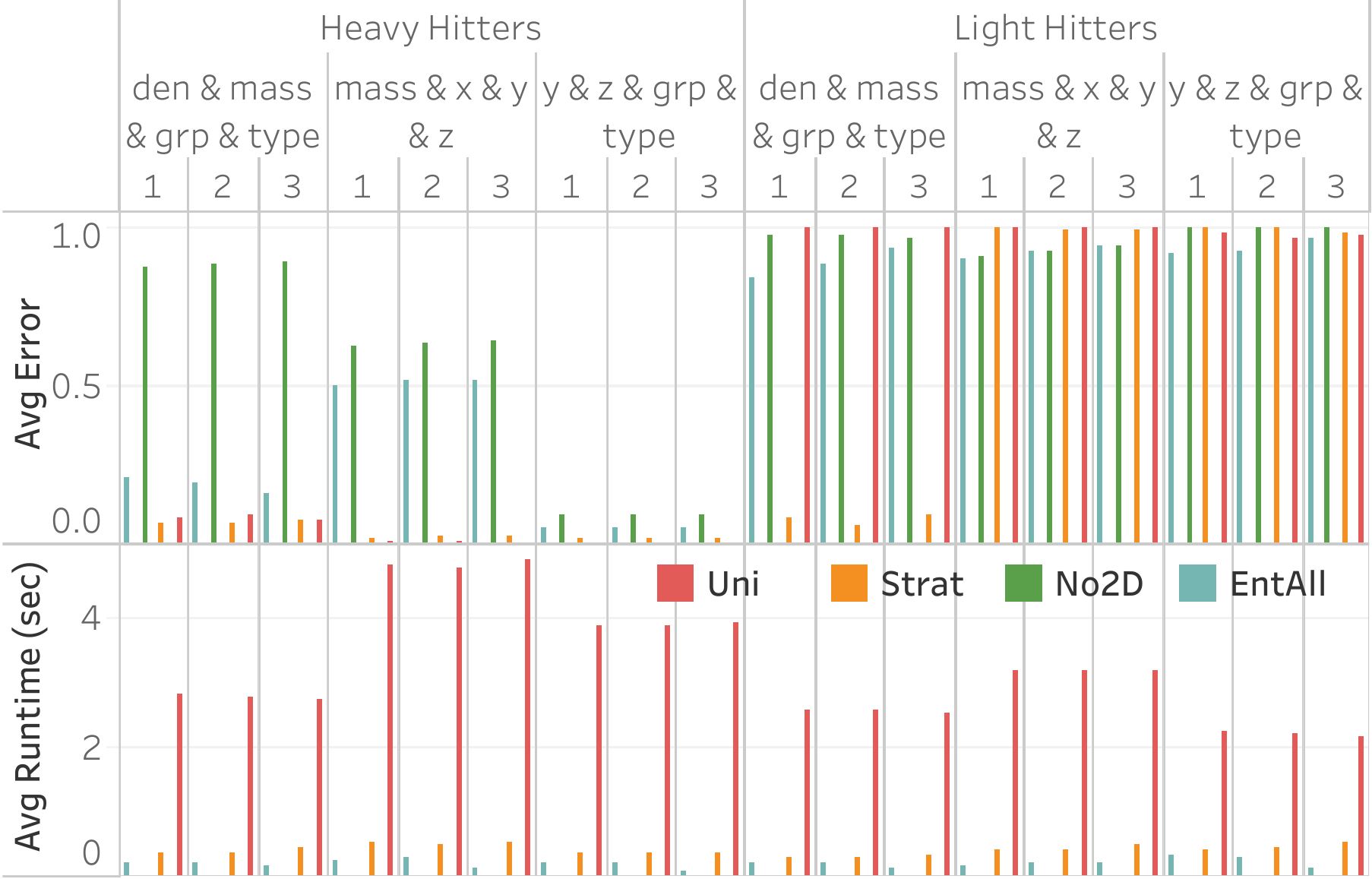}
    \caption{Query average error and runtime for three 4D selection queries on the \texttt{Particles} table. The stratified sample (orange) is stratified on (den, grp).}
    \label{fig:vulcan}
\end{figure}

Fig.~\ref{fig:vulcan} shows the query accuracy and runtime for three different selection queries as the number of snapshots increases. We see that \name consistently does better than sampling on query runtime, although both \name and stratified sampling execute queries in under one second. Stratified sampling outperforms uniform sampling because the stratified samples are generally smaller than their equally selective uniform sample.

In terms of query accuracy, sampling always does better than \name for the heavy hitter queries. This is expected because the bucketization of \texttt{Particles} is relatively coarse grained, and a 1 GB sample is sufficiently large to capture the heavy hitters. We do see that EntAll does significantly better than EntNo2D for query 1 because three of its five statistics are over the attributes of query 1 while only 1 statistic is over the attributes of queries 2 and 3. However, the query results of query 3 are more uniform, which is why EntNo2D and EntAll do well. 

For the light hitter queries, none of the methods do well except for the stratified sample in query 1 because the query is over the attributes used in the stratification. EntAll does slightly better than stratified sampling on queries 2 and 3.
\subsection{Statistics Selection}
\label{subsec:results:statistic_choice}

To compare different 2D statistic choices for our method, we look at the query accuracy of the four different MaxEnt methods summarized in Fig.~\ref{fig:maxentcompare}. We use the flights dataset and query templates from Sec.~\ref{subsec:results:accuracy}. We run six different two-attribute selection queries over all possible pairs of the attributes covered by pair 1 through 4; i.e., origin, destination, time, and distance. We select 100 heavy hitters, 100 light hitters, and 200 null values.

\begin{figure}[!t]
  \centering
  \subfloat[]{\includegraphics[width=0.23\textwidth]{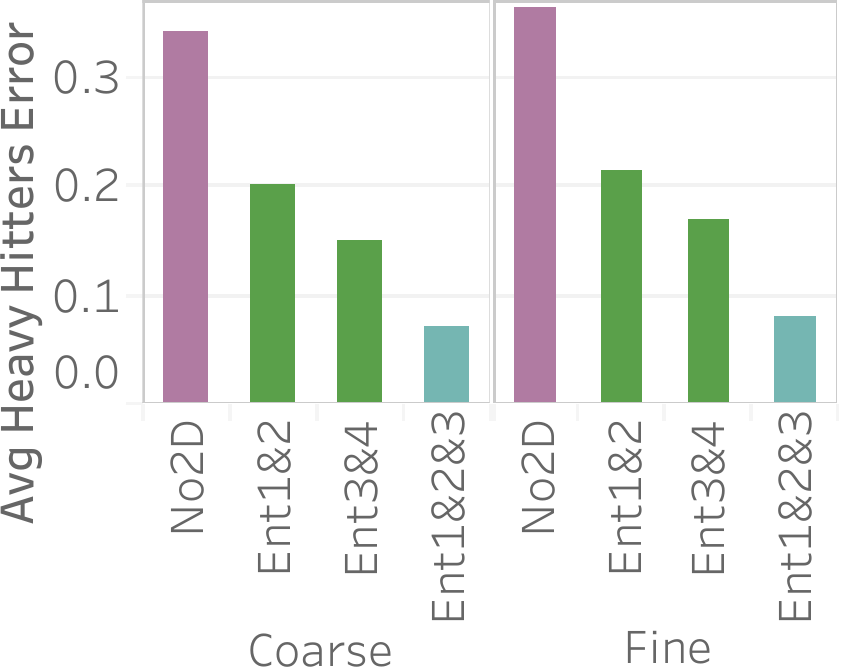}}
  \hfill
  \subfloat[]{\includegraphics[width=0.23\textwidth]{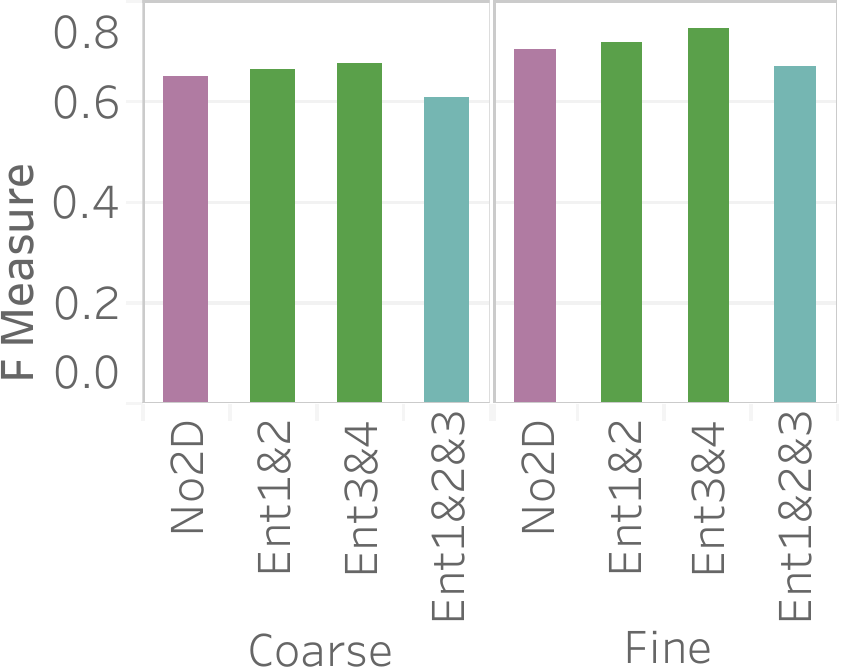}}
  \caption{(a) Error over 2D heavy hitter queries and (b) F measure over 2D light hitter and null value queries across different MaxEnt methods over \texttt{FlightsCoarse} and \texttt{FlightsFine}.}
  \label{fig:maxentcompare}
\end{figure}

Fig.~\ref{fig:maxentcompare} shows the average error for the heavy hitters and the F measure for light hitters across the six queries. Overall, we see that the summary with more attribute-pairs but fewer buckets (more ``breadth''), Ent1\&2\&3, does best on the heavy hitters. On the other hand, for the light hitters, we see that the summary with fewer attribute-pairs but more buckets (more ``depth'') and still covers the attributes, Ent3\&4, does best. Ent3\&4 doing better than Ent1\&2 implies that choosing the attribute-pairs that cover the attributes yields better accuracy than choosing the most correlated pairs because even though Ent1\&2 has the most correlated attribute-pairs, it does not have a statistic containing flight time. Lastly, Ent1\&2\&3 does best on the heavy hitter queries yet slightly worse on the light hitter queries because it does not have as many buckets as Ent1\&2 and Ent3\&4 and can thus not capture as many regions in the active domains with no tuples.

\section{Discussion}
\label{sec:discussion}
The above evaluation shows that \name is competitive with stratified sampling overall and better at distinguishing between infrequent and absent values. Importantly, unlike stratified sampling, \name's summaries permit multiple 2D statistics. The main limitations of \name are the dependence on the size of the active domain, correlation-based 2D statistic selection, manual bucketization, and limited query support.

To address the first problem, our future work is to investigate using standard algebraic factorization techniques on non-materializable polynomials. By further reducing the polynomial size, we will be able to handle larger domain sizes. We also will explore using statistical model techniques to more effectively decompose the attributes into 2D pairs, similar to \cite{deshpande2001independence}. To no longer require bucketizing categorical variables (like city), we will research hierarchical polynomials. These polynomials will start with coarse buckets (like states), and build separate polynomials for buckets that require more detail. This may require the user to wait while a new polynomial is being loaded but would allow for different levels of query accuracy without sacrificing polynomial size.

Lastly, to address our queries not reporting error, we will add variance calculations to query answers. We have a closed-form formula for calculating variance for a single statistic but still need to expand the formula to handle more complex queries. Additionally, our theoretical model can support more complex queries involving joins and other aggregates, but it is future work to implement these queries and make them run efficiently.

\section{Related Work}
\label{sec:related_work}
Although there has been work in the theoretical aspects of probabilistic databases \cite{suciu2011probabilistic}, as far as we could find, there is not existing work on using a probabilistic database for data summarization. However, there has been work by Markl \cite{markl2005consistently} on using the maximum entropy principle to estimate the selectivity of predicates. This is similar to our approach except we are allowing for multiple predicates on an attribute and are using the results to estimate the likelihood of a tuple being in the result of a query rather than the likelihood of a tuple being in the database.

Our work is also similar to that by Suciu and R\'{e} \cite{re2012understanding} except their goal was to estimate the size of the result of a query rather than tuple likelihood. Their method also relied on statistics on the number of distinct values of an attribute whereas our statistics are based on the selectivity of each value of an attribute. 

There has been much research in sampling techniques for faster query processing. In the work by Chaudhiri et al. \cite{chaudhuri2001robust}, they precompute the samples of data that minimizes the errors due to variance in the data for a specific set of queries they predict. The work by \cite{babcock2003dynamic} chooses multiple samples to use in query execution but only considers single column stratifications. The work by \cite{ding2016samplelus} builds a measure-biased sample for each measure dimension to handle sum queries and uniform samples to handle count queries. Depending on if the query is highly selective or not, they choose an appropriate sample. The later work of BlinkDB \cite{agarwal2013blinkdb} improves this by removing assumptions on the queries. BlinkDB only assumes that there is a set of columns that are queried, but the values for these columns can be anything among the possible set of values. BlinkDB then computes samples for each possible value of the predicate column in an online fashion and chooses the single best sample to run when a user executes a query.

Although we handle linear queries, our work makes no assumptions on query workload and can take into account multi-attribute combinations when choosing statistics. When a user executes a query, our method does not need to choose which summary to use. Further, the summary building is all done offline.

\section{Conclusion}
\label{sec:conclusion}
We presented, \name, a new approach to generate probabilistic database summaries for interactive data exploration using the Principle of Maximum Entropy. Our approach is complementary to sampling. Unlike sampling, \name's summaries strive to be independent of user queries and capture correlations between multiple different attributes at the same time. Results from our prototype implementation on two real-world datasets up to 210 GB in size demonstrate that this approach is competitive with sampling for queries over frequent items while outperforming sampling on queries over less common items.

\begin{small}
\noindent\textbf{Acknowledgments} This work is supported by NSF 1614738 and NSF 1535565. Laurel Orr is supported by the NSF Graduate Research Fellowship.
\end{small}

\end{sloppypar}
\small
\bibliographystyle{abbrv}
\bibliography{references}
\end{document}